\Crefname{algocf}{Algorithm}{Algorithms}
\crefname{algocfline}{line}{lines}
\Crefname{invariant}{Invariant}{Invariants}
\Crefname{claim}{Claim}{Claims}
\Crefname{subclaim}{Subclaim}{Subclaims}
\newcommand\thickbar[1]{\accentset{\rule{.4em}{.8pt}}{#1}}
\definecolor{DarkGray}{rgb}{0.66, 0.66, 0.66}
\definecolor{DarkPowderBlue}{rgb}{0.0, 0.2, 0.6}
\definecolor{fluorescentyellow}{rgb}{0.8, 1.0, 0.0}
\definecolor{cerulean}{rgb}{0.0, 0.48, 0.65}
\definecolor{bleudefrance}{rgb}{0.19, 0.55, 0.91}
\newcommand{\Int}{{\mathbf {I}}}
\newcounter{note}[section]
\newcommand{\initOneLiners}{%
    \setlength{\itemsep}{0pt}
    \setlength{\parsep }{0pt}
    \setlength{\topsep }{0pt}
}
  \def\\{}%
  \def\texttt#1{<#1>}%
  \def\textsf#1{<#1>}%
  \def\mathsf#1{<#1>}%
  \def\ensuremath#1{#1}%
  \def\xspace{}%
  \def\Cref#1{<Label:#1>}%
  \def\eqref#1{<Eq.:#1>}%
\newtheorem{theorem}{Theorem}[section]
\newtheorem{lemma}[theorem]{Lemma}
\newtheorem{claim}[theorem]{Claim}
\theoremstyle{definition}
\theoremstyle{remark}
\renewcommand{\theinvariant}{(I\@arabic\c@invariant)}
\newcommand{\eps}{\varepsilon}
\newcommand{\sse}{\subseteq}
\newcommand{\cost}{{\textsf{cost}}}
\newcommand{\poly}{\operatorname{poly}}
\newcommand{\Gen}{{\mathsf{Generate}}\xspace}
\newcommand{\cI}{{\mathcal{I}}}
\newcommand{\cR}{\mathcal{R}}
\newcommand{\nf}{\nicefrac}
\newcommand{\junk}[1]{}
\newcommand{\eat}[1]{}
\newif\ifhideproofs
\newcommand{\wtd}{\textsc {Weighted $k$-Server}\xspace}
\newcommand{\kser}{\textsc {$k$-Server}\xspace}
\newcommand{\page}{\textsc {Paging}\xspace}
\newcommand{\vc}{\textsc {Vertex Cover}\xspace}
\begin{document}
\title{Efficient Algorithms and Hardness Results \\ for the \wtd \ Problem}

\author{
{Anupam Gupta\thanks{Computer Science Department, Carnegie Mellon University, Pittsburgh, PA. Email: {\tt anupamg@cs.cmu.edu}. Supported in part by NSF awards CCF-1955785, CCF-2006953, and CCF-2224718.}}
\and
{Amit Kumar\thanks{Department of Computer Science and Engineering, IIT Delhi, New Delhi, India. Email: {\tt amitk@cse.iitd.ac.in}.}}
\and
{Debmalya Panigrahi\thanks{Department of Computer Science, Duke University, Durham, NC. Email: {\tt debmalya@cs.duke.edu}. Supported in part by NSF awards CCF-1750140 (CAREER) and CCF-1955703.}}
}

\maketitle

\begin{abstract}
  In this paper, we study the weighted $k$-server problem on the
  uniform metric in both the offline and online settings. We start
  with the offline setting. In contrast to the (unweighted) $k$-server
  problem which has a polynomial-time solution using min-cost flows,
  there are strong computational lower bounds for the weighted
  $k$-server problem, even on the uniform metric. Specifically, we
  show that assuming the unique games conjecture, there are no
  polynomial-time algorithms with a sub-polynomial approximation
  factor, even if we use $c$-resource augmentation for $c <
  2$. Furthermore, if we consider the natural LP relaxation of the
  problem, then obtaining a bounded integrality gap requires us to use
  at least $\ell$ resource augmentation, where $\ell$ is the number of
  distinct server weights. We complement these results by obtaining a
  constant-approximation algorithm via LP rounding, with a resource
  augmentation of $(2+\eps)\ell$ for any constant $\eps > 0$.

  \medskip
  In the online setting, an $\exp(k)$ lower bound is known for the
  competitive ratio of any randomized algorithm for the weighted
  $k$-server problem on the uniform metric. In contrast, we show that
   $2\ell$-resource augmentation  can bring the competitive
  ratio down by an exponential factor to only $O(\ell^2 \log
  \ell)$. Our online algorithm uses the two-stage approach of first
  obtaining a fractional solution using the online primal-dual
  framework, and then rounding it online. 
\end{abstract}

\pagenumbering{gobble}

\clearpage

\pagenumbering{arabic}

\section{Introduction}
\label{sec:intro}

The \kser problem is a foundational problem in online algorithms and
has been extensively studied over the last 30
years~\cite{onlinebook}. In this problem, there are a set of $k$ servers
that must serve requests arriving online at the vertices of an
$n$-point metric space. The goal is to minimize the total movement
cost of the servers. The \kser problem was defined by Manasse et al.~\cite{ManasseMS90}, who also showed a lower bound of $k$ on the competitive ratio of any deterministic algorithm for this problem. Koutsoupias and Papadimitriou~\cite{KoutsoupiasP95} gave a $(2k-1)$-compeititive algorithm for \kser.   There has been much progress in the recent past on obtaining randomized algorithms with polylogarithmic (in $k$ and $n$) competitive ratio~\cite{BansalBMN15, BubeckCLLM18, Lee18, BuchbinderGMN19}. 
The \wtd version of this problem, introduced by
Fiat and Ricklin~\cite{FiatR94}, allows the servers to have non-uniform
positive weights; the cost of moving a server is now scaled by its
weight. In this paper, we consider the \wtd problem on a uniform
metric, namely when all $n$ points of the metric space are at unit
distance from each other, which means that the cost of moving a server
between any two distinct points is simply the weight of the
server. Note that the corresponding unweighted problem for the uniform
metric % , where the servers have equal weight,
is the extensively studied \page problem~\cite{onlinebook}. Indeed, one of
the original motivations for studying the \wtd problem came from a
version of paging with non-uniform replacement costs for different
cache slots~\cite{FiatR94}. In the rest of this paper, we % when we talk of
% the \wtd problem, we 
will implicitly assume that the underlying metric space is a uniform
metric.

The original paper of Fiat and Ricklin~\cite{FiatR94} introducing the
\wtd problem (on uniform metrics) gave a deterministic algorithm with
a competitive ratio of about $2^{2^{2k}}$. They also showed a lower
bound of $(k + 1)!/2$ for deterministic algorithms. Chiplunkar and
Viswanathan~\cite{ChiplunkarV} improved this lower bound to
$(k+1)!-1$, and gave a randomized algorithm that is
$1.6^{2^k}$-competitive against \emph{adaptive} online adversaries;
this also implies a deterministic competitive ratio of $2^{2^{k+1}}$
using the simulation technique of Ben-David et
al.~\cite{Ben-DavidBKTW94}. Bansal, Elias, and
Koumutsos~\cite{BansalEK} showed that this competitive ratio is
essentially tight for deterministic algorithms by showing a lower
bound of $2^{2^{k-4}}$. They also gave a deterministic {\em work
  function algorithm} with a competitive ratio of
$2^{2^{k+O(\log k)}}$. If the number of distinct server weights is
$\ell$ and there are $k_j$ servers of weight $W_j$, then the
competitive ratio of their algorithm is
$\exp(O(\ell k^3 \prod_{j=1}^\ell (k_j + 1)))$, which is an exponential
improvement over the general bound when $\ell$ is a
constant. Unlike
the \kser and \page problems, it is unknown if randomization
qualitatively improves the competitive ratio for \wtd, although the
best known lower bound for randomized algorithms against oblivious
adversaries is only singly exponential in $k$~\cite{AyyadevaraC21} as
against the doubly exponential lower bound for deterministic
algorithms.

The above competitive ratios depend only on $k$, and are
  independent of the size $n$ of metric space. Moreover, the hard
  instances are for metric spaces with the number of points $n$ that
  are exponentially larger than the number of servers $k$. This is not
  a coincidence, since better algorithms exist for smaller values of
  $n$. Indeed, the \wtd problem can be modeled as a metrical task
  system, where each state $\omega$ is a configuration (specifying the
  location of each of the $k$ servers), and the distance between any
  two states $\omega, \omega'$ is the cost to move between the
  configurations. Since there are $N = n^k$ states, one can obtain an
  $n^k$-competitive deterministic algorithm~\cite{BLS}, and an
  $O(\poly(k \log n))$-competitive randomized algorithm against
  \emph{oblivious} adversaries~\cite{BBBT,BBN,BCLL,CoesterL}; all these algorithms
  use $\poly(n^{k})$ time to explicitly maintain and manipulate the
  entire metric space, and hence are not efficient.

  In this paper we ask: \emph{is it possible to get efficient (randomized)
  \emph{online} algorithms that have competitive ratios of the form
  $\poly(k \log n)$, or even better? Is it possible to get such approximation ratios
  even in the \emph{offline} setting?} We show that obtaining improved competitive or approximation ratios in polynomial time is possible, provided we allow for {\em resource augmentation} in the number of servers. 

  Resource augmentation in online algorithms has been widely studied in paging and scheduling settings~(see e.g. \cite{KalyanasundaramP95, SleatorT85}). It is often a much needed assumption that allows for obtaining bounded or improved competitive ratios for such problems. Bansal et al.~\cite{BansalEJK19} studied the \kser problem on trees under resource augmentation.

\subsection{Our Results}
\label{sec:our-results}

Our first result establishes computational hardness of approximating
the \wtd problem in the offline setting. Unlike \page or \kser, which
are exactly solvable offline in polynomial time, we show that under
the Unique Games conjecture, the \wtd problem cannot be approximated to any subpolynomial factor even when we allow $c$-resource augmentation for any constant $c < 2$.

\begin{restatable}[Hardness]{theorem}{Hardness}
\label{thm:hard2}
For any constant $\eps > 0$, it is UG-hard to obtain an $N^{\nf{1}{2}-\eps}$-approximation algorithm for \wtd  with two weight classes, even when we are allowed $c$-resource augmentation for any constant $c < 2$. Here $N$ represents the size of the input (including the request sequence length). 
\end{restatable}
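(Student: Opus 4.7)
The plan is to give a gap-preserving reduction from the Khot--Regev UG-hardness of Minimum \vc on $d$-regular graphs: for every constant $\delta > 0$, distinguishing $\text{VC}(G) \le \tau$ from $\text{VC}(G) \ge (2-\delta)\tau$ is UG-hard. Given the theorem's resource-augmentation factor $c < 2$, choose $\delta$ with $c < 2 - \delta$. From such a gap instance $G = (V, E)$ on $n$ vertices and $m$ edges, I will construct a \wtd instance on the uniform metric with point set $V \cup \{q\}$, $k_1 = n - \tau$ weight-$1$ servers, and a small number of weight-$W$ servers (for a large parameter $W$), initially parked at $q$. The request sequence is organized in rounds; each round, for every edge $e = (u, v) \in E$, issues a ``gadget'' of $L$ fast-alternating requests between $u$ and $v$. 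Between alternations, ``reset'' subsequences request vertices in $V$ in a carefully designed schedule so that any algorithm must maintain a weight-$1$ coverage closely tracking the intended configuration $S_1 \subseteq V$.

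\textbf{Completeness and soundness.} In the YES case, I place the $n - \tau$ weight-$1$ servers on $V \setminus \text{VC}$ (an independent set): every edge has exactly one endpoint (in VC) uncovered by weight-$1$, and serving each gadget only requires one weight-$W$ move to that endpoint, for total cost $O(mW)$. In the NO case, after $c$-resource augmentation the algorithm has $c(n - \tau)$ weight-$1$ servers covering some set $S_1'$; since $|S_1'| = c(n-\tau)$ and $c < 2 - \delta$, the complement $V \setminus S_1'$ is too large to be an IS, and a counting argument using $d$-regularity shows that $\Omega(m)$ edges have both endpoints in $V \setminus S_1'$. For each such ``bad'' edge, the oscillating gadget requires two simultaneously active servers at $u$ and $v$, and combined with the reset subsequences--which charge any displacement of a weight-$1$ server away from $S_1'$--a potential-function charging argument shows that the cost per bad edge is $\Omega(L/\text{poly}(n))$, yielding total NO cost $\Omega(mL/\text{poly}(n))$.

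\textbf{Amplification and the main obstacle.} Choosing $L \approx m^{1 - O(\eps)}$ and a single round gives total instance size $N = \Theta(mL) = m^{2 - O(\eps)}$ and YES/NO cost ratio $\Omega(L/W) = \Omega(N^{\nf{1}{2} - \eps})$ for $W = \text{poly}(n)$, proving the theorem. The principal obstacle is the soundness step, in particular the charging argument that rules out dynamic reallocation of weight-$1$ servers: because each weight-$1$ move costs only $1$ in the uniform metric, the algorithm could naively shuttle weight-$1$ servers between bad-edge gadgets at $O(1)$ per gadget, which would collapse the claimed gap. The reset subsequences are the countermeasure: they are interleaved with the alternations so that any displacement of a weight-$1$ server from a vertex $a \in S_1'$ is either immediately caught by a re-request at $a$ (forcing a return move) or triggers a cascade of further displacements, each incurring $\Omega(1)$ cost. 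A potential function tracking the Hamming distance between the current weight-$1$ configuration and $S_1'$ will be used to lower-bound the cost; designing the reset schedule precisely enough that this potential-based inequality yields $\Omega(L/\text{poly}(n))$ per bad edge is the key technical step.
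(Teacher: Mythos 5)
Your proposal is a genuine attempt at the right high-level strategy (a gap-preserving reduction from UG-hard \vc with the gap $2-\delta$ matched to the resource-augmentation factor $c<2$), but the concrete construction diverges from the paper's in a way that introduces real gaps.

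\textbf{The role of heavy versus light servers is inverted.} You park $n-\tau$ weight-$1$ servers on the independent set $V\setminus\text{VC}$ and let a small number of heavy servers chase the remaining endpoints. This has two problems. First, the YES-case claim ``every edge has exactly one endpoint in VC uncovered by weight-$1$'' is false in general: a vertex cover can contain both endpoints of an edge (e.g.\ a triangle with a size-two cover), and for such an edge neither endpoint is covered by a light server, so a single heavy server would have to oscillate for the whole gadget, paying $\Theta(LW)$ and destroying completeness. Second, because the algorithm has \emph{many} light servers, the shuttling attack you flag is genuine: each bad edge can be patched by moving a single light server there at cost $1$, and the counter that you propose---``reset subsequences'' plus a Hamming-distance potential---is precisely the piece you admit you have not designed. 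As stated, soundness does not go through.

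\textbf{What the paper does instead.} The paper's instance uses $t$ \emph{heavy} servers (weight $W=n^d$, one per vertex of the putative cover) and exactly \emph{one} light server, with all servers starting at an extra point $v_0$. The request stream is the concatenation, over edges $e_i=(u_i,v_i)$, of $W$ alternations between $u_i$ and $v_i$, and this whole $m$-phase block is repeated $W$ times. Completeness: keep the heavy servers fixed on the cover and move the light server once per phase, so the total is $tW+mW\le 2mW$. Soundness: with only $ct<\text{VC}(G)$ heavy servers, in each of the $W$ outer repetitions either a heavy server moves (cost $\ge W$) or some edge's gadget is entirely uncovered by heavy servers, forcing the lone light server to toggle $W$ times (cost $\ge W$); either way each repetition costs at least $W$, for a total $\ge W^2$. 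The ratio is $W/(2m)\ge n^{d-2}$, and with $N=O(mn^{2d})$ this is $N^{1/2-\eps}$ for $d=\Omega(1/\eps)$. The two features your sketch is missing---\emph{a single} light server (so there is no cheap shuttling to rule out) and \emph{outer repetition of the entire block} (so the adversary cannot amortize one heavy move over the whole stream)---are exactly what makes the soundness argument a five-line case analysis rather than an open-ended potential-function design. You would be better served dropping the reset/potential machinery and adopting this simpler construction.
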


Next, we show that the natural time indexed LP relaxation for \wtd (see~\ref{LP:tag}) has a
large integrality gap, unless we allow for a resource augmentation of
almost $\ell$, the number of distinct server weights.

\begin{restatable}[Integrality Gap]{theorem}{IntGap}
  \label{thm:int}
  For any constant $\varepsilon >
  0$, the integrality gap of the relaxation~\ref{LP:tag} for \wtd
  is unbounded, even with $(\ell-\varepsilon)$-resource augmentation.
\end{restatable}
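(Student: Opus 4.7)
The plan is to exhibit a family of instances, parameterized by a growth factor $M \to \infty$, with $\ell$ weight classes of geometrically growing weights $W_j = M^{j-1}$, in which the LP value stays bounded while any integer solution with $(\ell - \varepsilon) k_j$ servers per class pays $\Omega(M^{\ell-1})$. Letting $M \to \infty$ then gives an unbounded integrality gap. The construction will be recursive in $\ell$, in the spirit of the lower-bound constructions of Fiat--Ricklin, Chiplunkar--Viswanathan, and Bansal--Elias--Koumoutsos for the online \wtd problem, but adapted to the offline LP and the presence of resource augmentation.

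For the fractional upper bound, I will describe a solution that exploits the identity $\sum_{j=1}^{\ell}(1/\ell) = 1$: place a $1/\ell$-fraction of every weight class at each ``active'' point of the request schedule, so that each request is covered with total mass exactly $1$. This configuration requires only $1/\ell$-fraction shifts rather than whole-server moves, and by routing most of the mass movement through the light classes it can be made to cost well below $W_\ell$. For the integer lower bound, I will construct the adversarial request sequence recursively on $\ell$: at the outer level the adversary forces at least one class-$\ell$ movement, and within the epochs between two such movements a sub-instance over the remaining $\ell - 1$ classes is replayed. A potential-function argument tracking where the top-class servers are located then quantifies how the $(\ell-\varepsilon)$-fold augmented budget is still insufficient to avoid the top-class move.

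The main obstacle is precisely this integer lower bound under augmentation. The classical online lower bounds are tight only for the unaugmented budget, and with full $\ell$-fold augmentation the LP can in fact be rounded (as shown by the complementary positive result in this paper). The $\varepsilon$-slack from $\ell$ is therefore the sharp threshold: $\ell$ augmented classes, one per level, can collectively imitate the LP's $\ell$-fold fractional spreading, but only $\ell - \varepsilon$ classes fall short at some point, and the adversary's task is to amplify that shortfall into a full class-$\ell$ move. Making this recursion and the potential argument fit together cleanly---and choosing $k_j$, the number of points $n$, and $M$ so that both the fractional upper bound and the integer lower bound hold simultaneously---will require the most care.
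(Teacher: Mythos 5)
Your high-level plan coincides with the paper's: recursively nested request sequences over shrinking subsets of $V$, weight classes separated geometrically by a factor $M$, a $\nicefrac{1}{\ell}$-spreading fractional solution of cost $O(M^{\ell-2})$, and an integer lower bound of $\Omega(M^{\ell-1})$ established by recursion on the number of remaining weight classes. What is missing is exactly the piece you flag as unresolved, and in the paper it is a concrete pigeonhole, not a potential-function argument.

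The device is to nest subsets $V = S_0 \supsetneq S_1 \supsetneq \cdots \supsetneq S_{\ell-1}$ with $|S_r| = n/C^r$ for a shrinkage factor $C := \ell/\varepsilon$, and to set $k_r := n/(\ell C^{r-1})$. Inside the level-$r$ recursion, the adversary cycles through \emph{all} subsets $S \subseteq S_r$ of size $|S_r|/C$, repeating $L_r = M^r$ times. The choice $C = \ell/\varepsilon$ forces the identity $(\ell - \varepsilon) k_{r+1} = |S_r|\bigl(1 - \tfrac{1}{C}\bigr) = |S_r| - |S_r|/C$. Hence, in any iteration of the outer loop in which no class-$(r+1)$ server moves, some subset of $S_r$ of size $|S_r|/C$ is entirely free of class-$(r+1)$ servers; since the adversary tries all such subsets, it eventually visits that one, and the inductive hypothesis (with one fewer heavy class) applies to the corresponding recursive call. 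Either way each iteration costs at least $W_{r+1}$, for a total of $L_r W_{r+1} = M^{\ell-1}$, closing the reverse induction. Note also that the recursion does not literally force a heaviest-server move, only a total cost equal to one such move, possibly dissipated across many light-server moves; your framing should be adjusted accordingly. Filling in this $C = \ell/\varepsilon$ counting is the ingredient your proposal leaves open.
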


 It is worth noting  that an optimal fractional solution to~\ref{LP:tag} can be easily rounded to give an $\ell$-approximation ratio with $\ell$-resource augmentation. Indeed, we know that for each request, there exists a weight class which services this request to an extent of at least $\nf{1}{\ell}$. We can then scale this fractional solution by a factor $\ell$ and reduce this problem to $\ell$ instances of standard \page problem. The integrality gap result shows that any rounding algorithm with bounded competitive ratio must incur almost $\ell$-resource augmentation. 
We complement this integrality gap result with our main technical result, which gives an offline 
$O(1/\eps)$-approximation with $(2+\eps)\ell$-resource augmentation,
for any $\eps\in (0, 1)$.

\begin{restatable}[Offline Algorithm]{theorem}{Offline}
  \label{thm:main}
  Let $\cI$ be an instance of \wtd with $k_j$ servers of weight $W_j$ for all $j \in [\ell]$. 
  For any $\eps \in (0, 1)$, there is a polynomial time algorithm for $\cI$ that uses at most $2(1+\eps)\ell \cdot k_j$ servers of weights $W_j$ for each $j \in [\ell]$ and has server movement cost at most $O(1/\eps)$ times the optimal cost of $\cI$.
\end{restatable}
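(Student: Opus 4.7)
I would start from the natural time-indexed LP relaxation~\ref{LP:tag}: variables $x_{j,t,p}\in[0,1]$ denote the fraction of a weight-$W_j$ server at point $p$ at time $t$; the constraints $\sum_j x_{j,t,r_t}\ge 1$ enforce service at each request time $t$ (with $r_t$ the requested point) and $\sum_p x_{j,t,p}\le k_j$ enforces the per-class budget; the objective is the movement cost $\sum_{t,j,p}W_j\max(0,x_{j,t,p}-x_{j,t-1,p})$. Solving this LP in polynomial time gives a fractional optimum of value $\mathrm{LP}\le\opt$, and my goal is to round it into an integral schedule using $2(1+\eps)\ell\, k_j$ servers of class $j$ with total movement at most $O(1/\eps)\cdot\mathrm{LP}$.

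I would split the augmentation budget $2(1+\eps)\ell$ into two halves of size $(1+\eps)\ell$ each. The first half funds a \emph{request-to-class assignment}: scale the LP by $(1+\eps)\ell$, so every request is fractionally served to total extent $(1+\eps)\ell$; by pigeonhole each request $t$ has some class $j(t)$ covering it to extent at least $1+\eps$, and we assign $t\mapsto j(t)$. This produces $\ell$ independent paging sub-instances $R_1,\ldots,R_\ell$, where $R_j$ is a standard paging instance of cache size $(1+\eps)\ell\, k_j$ on a single weight $W_j$, together with a fractional schedule that covers each of its requests to extent at least $1+\eps$. The second half funds a \emph{fractional-to-integral rounding} per class: doubling the cache to $2(1+\eps)\ell\, k_j$ and using a Belady-style offline rounding, the $1+\eps$ coverage slack lets me charge every integral fault to $\eps$ units of incoming fractional mass, so the resulting integral schedule for $R_j$ costs $O(1/\eps)$ times the class-$j$ fractional cost of the scaled LP.

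The \textbf{main obstacle}, where the real technical effort will lie, is avoiding an extra factor of $\ell$ in the approximation ratio. Adding $O(1/\eps)\cdot(\text{scaled class-}j\text{ cost})$ over $j$ naively gives $O(\ell/\eps)\cdot\mathrm{LP}$, because the LP was scaled by $(1+\eps)\ell$ before rounding. Matching the stated $O(1/\eps)$ bound requires charging the rounding loss to the \emph{original} (unscaled) LP, using the $(1+\eps)\ell$ scaling only to justify the pigeonhole assignment. I expect this calls for a more delicate rounding that is aware of the un-inflated $x_{j,t,p}$ values and uses the $2(1+\eps)\ell\, k_j$ augmented servers only to absorb integrality loss, likely via a hierarchical time-decomposition of the request sequence that processes each weight class at its own temporal scale and ensures that each level of the decomposition contributes only a constant-in-$\eps$ blowup rather than an $\ell/\eps$ blowup.
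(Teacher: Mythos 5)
Your plan correctly identifies the LP and correctly identifies the central obstacle---that a naive ``scale by $\Theta(\ell)$, pigeonhole each request onto one class, solve $\ell$ paging instances'' approach gives only $O(\ell)$ (or $O(\ell/\eps)$) approximation, not $O(1/\eps)$. But the proposal does not actually resolve that obstacle; it only names it and gestures at ``a more delicate rounding'' via an unspecified hierarchical time-decomposition. In fact, the pigeonhole step itself is the trivial reduction the paper explicitly dismisses (right after Theorem~\ref{thm:int} it notes that scaling by $\ell$ and reducing to $\ell$ independent \page instances gives $\ell$-approximation with $\ell$-resource augmentation). Committing each request to a single weight class throws away the fact that the \emph{combined} fractional coverage across classes is $\Theta(\ell)$, and once that commitment is made there is no slack left to recover the lost factor of $\ell$: each per-class paging instance must be served to extent $\ge 1$, so you cannot later divide the per-class fractional mass down.

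The paper's actual mechanism is quite different and is the missing idea. Stage~I scales the LP by $(2+\eps/2)\ell$ and then \emph{discretizes} each $\widetilde x_{v,j,\cdot}$ to an integer-valued step function $\bar x_{v,j,\cdot}$ via an UP/DOWN hysteresis rule with gap $\eps/2$ (Algorithm~\ref{algo:rs}); the hysteresis is what bounds the number of unit intervals created at each level $h$ by $(2/\eps)$ times the total variation of the truncated profile, giving cost $O(\ell/\eps)\cdot\mathrm{LP}$ while preserving $\sum_j \bar x_{\sigma_t,j,t} > \ell$ and $\sum_v \bar x_{v,j,t}\le (2+\eps)\ell k_j$ (Lemmas~\ref{lem:range} and~\ref{lem:discrete}). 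Stage~II then exploits integrality of $\bar y$: dividing by $\ell$ produces a \emph{fractional} solution $y^\ast=\bar y/\ell$ of cost only $O(1/\eps)\cdot\mathrm{LP}$ that still covers every request to extent $\ge 1$, and---after dropping the packing constraints, which are automatically respected because the final rounding stays in the support of $\bar y$---the remaining per-location interval-cover LP has a consecutive-ones constraint matrix, hence is totally unimodular and rounds with no further loss. The crucial step your plan lacks is precisely this ``discretize to integers, then divide by $\ell$'' move: it is the division by $\ell$ \emph{after} discretization that kills the extra $\ell$ factor, and it only works because $\bar y$ is integral and the covering slack is exactly $\ell$. Without it, your stage~II charges against the scaled LP and is stuck at $O(\ell)$.
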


Finally, we obtain an online algorithm for \wtd with $2\ell$-resource
augmentation. The competitive ratio of the online algorithm is
$O(\ell^2\log \ell)$. (In constrast to the offline setting, it is no
longer clear how to achieve an $\ell$-competitive algorithm even if
we augment the number of servers by a factor of $\ell$.)

\begin{restatable}[Online Algorithm]{theorem}{Online}
  \label{thm:online}
  Let $\cI$ be an instance of \wtd with $k_j$ servers of weight $W_j$ for all $j \in [\ell]$. 
  There is a randomized (polynomial time) online algorithm for $\cI$ that uses at most $2\ell k_j$ servers of weights $W_j$ for each $j \in [\ell]$ and has expected server movement cost at most $O(\ell^2 \log \ell)$ times the optimal cost of $\cI$.
\end{restatable}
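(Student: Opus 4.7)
I would follow the two-stage strategy indicated in the abstract: first produce a fractional solution online via a primal-dual algorithm, and then round it to an integral one online using additional resource augmentation. The natural time-indexed LP has variables $x_{j,v,t}$ for the fraction of a weight-$W_j$ server absent from point $v$ at time $t$, with a covering constraint $\sum_{j=1}^\ell(1-x_{j,v,t})\ge 1$ at every request $(v,t)$ and per-class cache constraints $\sum_v(1-x_{j,v,t})\le k_j$; the objective aggregates eviction mass weighted by $W_j$. The plan is to use $\ell$-augmentation on the fractional side to obtain an $O(\ell^2\log\ell)$-competitive fractional solution, and then double the number of servers per class during rounding so that the integral algorithm uses $2\ell k_j$ servers of weight $W_j$ while losing only a constant factor.

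\textbf{Stage 1 (fractional primal-dual).} I would extend the Bansal--Buchbinder--Naor online primal-dual framework for weighted paging to handle multiple weight classes. The algorithm raises the dual variable of each incoming request while exponentially decreasing the $x_{j,v,t}$'s at other points within each class, with per-class update rates calibrated by $W_j$ and $k_j$. A clean way to decouple the classes is to split the unit coverage requirement of every request equally among the $\ell$ classes (so each class must supply at least $1/\ell$), reducing the problem per class to a fractional paging-type instance. Running the weighted-paging primal-dual within each class and adding up costs, with an augmented $\ell k_j$ cache in class $j$, should yield an $O(\ell^2\log\ell)$-competitive fractional algorithm. The principal technical obstacle is calibrating these per-class update rates so that the single growing dual simultaneously pays for all per-class primal evictions, and verifying that the standard potential-function argument still closes once the covering constraint is split and summed over classes.

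\textbf{Stage 2 (online rounding).} Given the fractional solution from Stage 1, I would apply the classical online fractional-to-integral rounding for paging---coupling the integral server indicators to the fractional mass via a single random threshold per point---separately within each weight class $j$: starting from $\ell k_j$ fractional servers, maintain $2\ell k_j$ integral servers in class $j$ whose per-point absence indicators track $x_{j,v,\cdot}$. A standard analysis shows that each class's expected integral movement cost is within an $O(1)$ factor of its fractional cost against oblivious adversaries. Summing over the $\ell$ weight classes then yields a randomized online algorithm using $2\ell k_j$ servers of weight $W_j$ with expected cost $O(\ell^2\log\ell)\cdot\opt$, as required. The main worry here is a mild one: ensuring that the per-class rounding schemes can be run in parallel without introducing cross-class coupling or new correlated bad events beyond what the union bound over classes already absorbs.
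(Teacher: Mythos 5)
Your Stage~2 matches the paper's rounding step (decouple by weight class and apply the classical online fractional-to-integral rounding for paging per class, \cite{BlumBK99}), and your rewriting in anti-page variables is also the paper's starting point. The gap is in Stage~1, and it is exactly the difficulty the paper flags as ``trickier here \ldots we do not know the weight of the server which serves this location in the optimal solution.''

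You propose to ``split the unit coverage requirement of every request equally among the $\ell$ classes (so each class must supply at least $1/\ell$)'' and then run an independent weighted-paging primal-dual per class. This imposes a strictly stronger set of constraints than~\eqref{eq:cons2}, and the optimum of the resulting decoupled LP can be unboundedly larger than the optimum of the original instance. Concretely, with one server of weight $W_1 = 1$ and one of weight $W_2 = W \gg 1$, and requests cycling over many distinct locations, the true optimum handles everything with the light server, but your decoupled constraints force the heavy class to fractionally cover $1/\ell$ of every request, which after the $\ell$-scaling is a weight-$W$ paging instance with $O(\ell)$ cache slots over a long cycle and therefore incurs cost $\Theta(W)$ per round. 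Any algorithm that maintains your decoupled invariant is being compared against an LP whose value is a factor $\Theta(W)$ above $\opt$, so it cannot be $O(\ell^2\log\ell)$-competitive against $\opt$.

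What the paper actually does is quite different: it maintains the much weaker relaxed covering condition~\eqref{eq:cover-relax} --- that \emph{some} class $j$ has $z_{\sigma_t,j,t}\le 1-\nf{1}{2\ell}$ --- and, when this fails, moves fractional mass toward $\sigma_t$ \emph{in every class simultaneously, at equal \emph{cost} rates} (the ODE~\eqref{eq:rate} has a $\nf{1}{W_j}$ factor built in). The potential $\Phi(t)=\sum_{v,j:\optt=0} W_j\ln\bigl((1+\delta)/(\zt+\delta)\bigr)$ then compares directly against the true integral $\OPT$: the class $\jstar$ that $\OPT$ uses at $\sigma_t$ contributes no $\Delta^+$ term (since $\opt_{\sigma_t,\jstar,t}=1$), the remaining $\ell-1$ classes contribute at most $\ell-1$, and the $\Delta^-$ drain contributes roughly $-\ell$, so the net potential drop pays for the algorithm's cost divided by $O(\ell)$ with $\delta=\nf{1}{2\ell}$. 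The algorithm never needs to guess which class $\OPT$ is using. Your per-class primal-dual with a split requirement has no mechanism to ``discount'' the class $\OPT$ does not use, which is precisely the missing idea.

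A secondary point: in the paper, after scaling, the fractional solution already uses $2\ell k_j$ servers per class, and the rounding of \Cref{lem:page-rounding} preserves the cache size, so $2\ell k_j$ integral servers suffice. Your plan of $\ell k_j$ fractional and $2\ell k_j$ integral differs cosmetically; the load-bearing change is Stage~1. Your Stage~2 worry about cross-class coupling is moot in the paper's scheme because the fractional solution guarantees each request is fully served by a single class (condition~\eqref{eq:cover-zts}), so the per-class request sequences are disjoint.
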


Since $\ell \le k$, the competitive ratio of the 
online algorithm is $O(k^2\log k)$. This implies that an $O(\ell^2)$-resource augmentation 
achieves at least an exponential improvement in the 
competitive ratio of the \wtd problem. Moreover, 
by rounding the weights to powers of $2$, we can assume that
$\ell \le O(\log W)$, where $W$ is the aspect ratio of the 
server weights. Hence, the competitive ratio of the 
online algorithm is $O(\log^2 W \log \log W)$.
Finally, note that for $\ell = O(1)$,
the above theorem gives a $O(1)$-competitive online algorithm with 
$O(1)$-resource augmentation. This can be seen as a generalization
of the classic result for the \page problem that achieves
a randomized competitive ratio of $O(\log \frac{k}{k-h+1})$
where the algorithm's cache has $k$ slots while the 
adversary's has only $h < k$ slots~\cite{Young91}.

\subsection{Our Techniques}
\label{sec:techniques}

In this section, we give an overview of the main techniques in the paper. The UG hardness of \wtd is based on a reduction from the \vc problem. Given an instance of the vertex cover problem, the corresponding \wtd consists of one point in the uniform metric space for each vertex of the graph.  The main observation is that if we know the minimum vertex cover size, we can keep one heavy weight server at each point corresponding to this vertex cover, which never change their positions. One can then generate an input sequence where the optimal solution pays a small cost, whereas an algorithm which does not cover an edge using heavy servers pays a much higher cost. The UG-hardness result for \vc translates to a corresponding resource augmentation lower bound for \wtd. Extending this approach to more than two weight classes (with stronger lower bounds on resource augmentation) turns out to be more challenging because the length of the input sequence becomes exponential in $n$. Instead, we show that the natural LP relaxation has a large integrality gap. The large gap instance consists of cycling through a sequence of subsets of the metric spaces with carefully varying frequency. The fractional solution is able to maintain a low cost by uniformly spreading servers over such cycles, but the integral solution is forced to service some of the cycles by small number of servers only. 

Our main technical result shows how to round a solution to the LP relaxation. The relaxation has both covering and packing type constraints, and the rounding carefully addresses one set of constraints without violating the other. We first scale the LP by a factor of about $2 \ell$, thus increasing both the resource augmentation and the cost. As a result, each request $\sigma_t$ is covered to an extent of $2\ell,$  and we can split this coverage across those weight classes which cover $\sigma_t$ to an extent of at least 1. Now for a fixed weight class, we consider the requests which are covered by it to an extent of at least 1. We show how to integrally round this solution so that this coverage property is satisfied and yet, we do not violate any packing constraint. After this, we show that the packing constraints can be ignored. This allows to scale down the LP solution by a factor $\ell$ (which saves the cost by this factor) and uses total unimodularity of the constraint matrix to round it. 

We extend our approximation algorithm to the online setting. The first step is to maintain an online fractional solution to the LP relaxation. Standard (weighted) paging algorithms for this problem rely on the fact that even the optimal offline algorithm needs to place a server at a requested location. But this turns out to be trickier here as we do not know the weight of the server which serves this location in the optimal solution. So we serve a request by ensuring that fractional mass from each weight classes is transferred at the same rate. The overall analysis proceeds by a careful accounting in the potential function. The online fractional solution satisfies the stronger guarantee that each request is served by servers of a particular weight class only. This allows us to reduce the rounding problem to independent instances of the \page problem.

We now give an overview of the rest of the submission.  In \S\ref{sec:int-gap}, we give details of the integrality gap construction; we defer the UG hardness proof to \S\ref{sec:hardness}. The offline rounding of the LP relaxation is given in \S\ref{sec:offline}, and then we extend this algorithm to the online case in \S\ref{sec:online}. 

\subsection{Preliminaries}
\label{sec:prelims}

In the \wtd problem on the uniform metric, we are given a set of $n$
points $V=\{1, \ldots, n\}$, such that $d(v,v') = 1$ for each
$v \neq v'$. There are $k$ servers, labeled $1, \ldots, k$, with
server $i$ having weight $w_i \geq 0$. The input specifies a request
sequence $(\sigma_1, \ldots, \sigma_T)$ of length $T$, with each
request $\sigma_t$ arriving at {\em time} $t$ being a point in $V$.  A
solution $f: [k] \times \{0,\ldots, T\} \to V$ specifies the position
of each server at each time $t \in [T]$ (where the initial positions
$f(i,0)$ are specified as part of the problem statement) such that for
each time $t$ there exists some server $i_t$ such that
$f(i_t,t) = \sigma_t$. The cost of the solution $f$ is the total
weighted distance travelled by the servers, i.e.,
$$ \nicefrac12 \sum_{i=1}^k w_i \; \sum_{t=1}^T \mathbbm{1}[f(i,t) \neq f(i,t-1)]. $$
The goal is to find a solution with the minimum cost. 
We say that an instance has $\ell$ {\em weight classes} if the set $\{w_1, \ldots, w_k\}$ has cardinality $\ell$. 
For an instance with $\ell$ different weight classes, we denote the
distinct weights by $W_1, \ldots, W_\ell$, and  let $k_j$ denote the
number of servers of weight $W_j$, with $\sum_j k_j = k$.
%Let $V_1, V_2, \ldots, V_\ell$ be the set of servers with weight $W_1,
%W_2, \ldots, W_\ell$ respectively.
For such an instance and a parameter $c \geq 1$, we say that the
algorithm uses \emph{$c$-resource augmentation} if it  uses $\lfloor
ck_j \rfloor$ servers of weight $W_j$ for each $j=1, \ldots, \ell$. 

We now describe the  natural LP relaxation for \wtd. It has a variable $x(v,j,t)$ for each request
time $t$, weight class $j \in [\ell]$ and vertex $v \in V$; it denotes
the fractional mass of servers of weight $W_j$ that are present at
point $v$ at time $t$. Let $\sigma_t$ denote the vertex requested at
time $t$. It is easy to verify that this is a valid relaxation. %Then the LP relaxation is as follows:
\begin{alignat}{2}
    \label{LP:tag}
    \tag{LP}
    \min \nicefrac12 \sum_{j \in [\ell]}  W_j  \sum_{t} \sum_{v \in V} &
    \; |x_{v,j,t}-x_{v,j,t-1}| && \\
    \sum_{v \in V} x_{v,j,t} & \leq k_j &\quad &\forall t, j \in [\ell]  \label{eq:cons1} \\
    \sum_{j \in [\ell]} x_{\sigma_{t},j,t} & \geq 1 & &\forall t \label{eq:cons2} \\
    x_{v,j,t} & \geq 0 && \forall t,v \in V,j \in [\ell] \notag
\end{alignat}

%%% Local Variables:
%%% mode: latex
%%% TeX-master: "main"
%%% End:

\section{The Unique Games Hardness}
\label{sec:hardness}

In this section, we consider the special case of \wtd when there are
only two weight classes. Assume wlog that the two distinct weights are
1 and $W$, where $W \gg 1.$ Our first main result shows that getting a
good approximation algorithm with $(2-\eps)$-resource augmentation for
any constant $\eps > 0$ is as hard as getting a better-than-two
approximation for the vertex cover problem. 

\Hardness*

\begin{proof}
  We give a reduction from the \vc problem. Let $d = d(\eps)$ be a
  constant to be fixed later, and let $c < 2$ be a constant as in the
  statement of the theorem.  Let $\cI = (G=(V,E), t)$ be an
  instance of the \vc problem on $n$ vertices. We know that it is
  UG-hard to distinguish between the following two cases: (i) $G$ has
  a vertex cover of size at most $t$, or (ii) every vertex cover of $G$
  must have size strictly larger than $ct$.
  % where $\delta > 0$ is a
  % positive constant.

  We map $\cI$ to an instance $\cI'$ of \wtd as follows: the set of
  points in $\cI'$ is given by $V \cup \{v_0\}$, where $v_0$ is a
  special vertex. There are $t$ servers of weight $W = n^{d}$ and one
  server of unit weight. Let the edges in $E$ be $e_1, \ldots, e_m$. A
  subsequence of the request sequence consists of $m$ {\em phases},
  where we have a phase for each edge $e_i$. During phase $i$
  corresponding to edge $e_i=(u_i, v_i)$, the request sequence toggles
  between $u_i$ and $v_i$ for $W$ times. Finally, the subsequence is
  repeated $W$ times.  In other words, it is the following sequence
  $$ \big( \underbrace{u_1, v_1, u_1, v_1, \ldots, u_1, v_1}_{W \
    \mbox{times}}, \ldots,\underbrace{u_m, v_m, u_m, v_m, \ldots, u_m,
    v_m}_{W \ \mbox{times}} \big)^W.  $$ We also have to specify the
  initial location of the servers. Assume that all servers are at
  location $v_0$ in the beginning.  This completes the description of
  the instance $\cI'$. Observe that $N$, the number of requests in
  instance $\cI'$ is $O(m \cdot n^{2d})$.

  \begin{claim}
    \label{cl:case1}
    Suppose $G$ has a vertex cover of size at most $t$. Then the cost of
    the optimal solution for $\cI'$ is at most $2mW$.
  \end{claim}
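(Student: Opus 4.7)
The plan is to use the assumed vertex cover $S \subseteq V$, with $|S|\le t$, to describe an explicit server schedule and bound its cost. We may assume $t\le m$ (else replace $S$ by the trivial cover that selects one endpoint per edge), so it suffices to exhibit a solution of cost at most $tW + mW \le 2mW$.

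The construction is the natural one: at time $0$, move each of the $t$ heavy servers from $v_0$ to a distinct vertex of $S$ (if $|S|<t$, park the extras on arbitrary vertices of $S$). These heavy servers then never move again for the remainder of the sequence. This one-time setup contributes at most $t\cdot W$ to the total cost. The light server's trajectory is defined phase by phase: at the beginning of the phase corresponding to edge $e_i=(u_i,v_i)$, since $S$ is a vertex cover, at least one endpoint of $e_i$ lies in $S$. If both endpoints lie in $S$, the light server stays where it is; otherwise, exactly one endpoint of $e_i$, say $v_i\notin S$, is uncovered by a heavy server, and we move the light server to $v_i$ at the start of the phase and keep it there for the entire phase.

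It remains to verify feasibility and bound the cost. Feasibility: within phase $i$, every request at the covered endpoint is served by the stationary heavy server at that endpoint, and every request at the (possibly) uncovered endpoint $v_i$ is served by the light server, which is parked there for the duration of the phase. Cost of light moves: within each phase the light server moves at most once, and each such move costs $1$ (unit weight, uniform metric). Since the subsequence of $m$ phases is repeated $W$ times, the total number of light-server moves is at most $mW$, contributing at most $mW$ to the cost. Adding the one-time heavy-server setup cost yields a total of at most $tW + mW \le 2mW$, as required.

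I do not expect any real obstacle here; the only mild subtlety is the WLOG assumption $t\le m$, which I will dispatch in one sentence at the start of the proof by noting that the trivial ``one endpoint per edge'' cover has size at most $m$, so we may as well use it if $t>m$. The rest is a direct bookkeeping argument, and the factor $2$ in the statement is exactly what allows absorbing the initial placement cost of the heavy servers into the bound.
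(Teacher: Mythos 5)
Your proof matches the paper's argument: move the $t$ heavy servers from $v_0$ onto a vertex cover at time $0$ (one-time cost at most $tW$), keep them fixed thereafter, and let the unit-weight server relocate at most once at the start of each of the $mW$ phases to the uncovered endpoint of that phase's edge (cost at most $mW$). You additionally flag and dispatch the implicit assumption $t \le m$ needed for the final step $tW + mW \le 2mW$, which the paper's proof uses without comment; that is a worthwhile clarification but does not change the route of the argument.
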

  \begin{proof}
    Let $V' \subseteq V$ be a vertex cover of size $t$. Consider the
    following solution: we move the $t$ heavy servers from $v_0$ to
    $V'$ at the beginning. From now on, the heavy servers will not
    move at all. During a phase corresponding to an edge
    $e_i= (u_i, v_i)$, we know that at least one of these vertices
    will be occupied by a heavy server. If the other end-point, say
    $v_i$, is not occupied by a heavy server, we move the server of
    weight 1 to $v_i$. Now we have two servers occupying $u_i$ and
    $v_i$ respectively until the end of this phase. The total movement
    cost is incurred either at the beginning (which is $tW$ overall),
    or at the beginning of each phase (when the cost is 1). Since
    there are $mW$ phases, the overall cost is at most
    $tW + mW \leq 2m W$.
  \end{proof}

  \begin{claim}
    \label{cl:case2}
    Suppose every vertex cover in $G$ has size strictly larger than
    $ct$. Then cost of the optimal solution for $\cI'$, even with
    $c$-resource augmentation, is at least $W^2$.
  \end{claim}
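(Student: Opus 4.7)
My plan is to partition the request sequence into its $W$ rounds (each round being one copy of the full $m$-phase subsequence) and show that any algorithm with $c$-resource augmentation must incur cost at least $W$ in each round; summing will give the $W^2$ bound. The starting observation is that with $c$-resource augmentation and $c < 2$, the algorithm has $\lfloor c \cdot 1 \rfloor = 1$ light server and $\lfloor ct \rfloor$ heavy servers, and since every vertex cover of $G$ has size strictly greater than $ct \geq \lfloor ct \rfloor$, no placement of the $\lfloor ct \rfloor$ heavy servers can form a vertex cover of $G$; in particular, at every instant, some edge of $G$ has both its endpoints unoccupied by heavy servers.

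The key step will be a per-round dichotomy. I would fix a round $r$ and let $M_r$ denote the number of heavy-server moves during round $r$, and split into two cases. In the first case, $M_r \geq 1$: the weighted heavy cost in round $r$ is at least $W \cdot M_r \geq W$. In the second case, $M_r = 0$: the heavy-server positions form a fixed set $S_r$ throughout the round, and by the observation above $S_r$ misses some edge $e_{j_r} = (u_{j_r}, v_{j_r})$. During phase $j_r$, which requests $u_{j_r}$ and $v_{j_r}$ alternately $W$ times each, no heavy server is ever located on $\{u_{j_r}, v_{j_r}\}$, so the single light server alone must serve all $2W$ requests of phase $j_r$; since consecutive requests are at different points, the light server must move at least $2W - 1 \geq W$ times, incurring cost at least $W$ in round $r$.

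Summing the per-round lower bound over the $W$ rounds would then give total cost at least $W \cdot W = W^2$, as required. The main subtlety to get right will be the $M_r = 0$ case: I need to argue cleanly that since no heavy server moves during the entire round, no heavy server can visit $u_{j_r}$ or $v_{j_r}$ at any point during phase $j_r$, which forces the light server to bear the full alternating cost and prevents any double-counting against the heavy-server cost bound from the other case. Beyond this bookkeeping, the proof is driven by the simple combinatorial fact that $\lfloor ct \rfloor$ heavy servers are too few to pin down a vertex cover of $G$, and I do not anticipate any additional obstacles.
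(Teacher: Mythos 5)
Your proposal is correct and follows essentially the same approach as the paper: partition the sequence into $W$ rounds, and in each round use the dichotomy of either a heavy-server move (cost $\geq W$) or a fixed heavy configuration that misses some edge, forcing the single light server to toggle through that edge's phase (cost $\geq W$). Your explicit note that $c<2$ gives exactly one light server and that $\lfloor ct\rfloor$ heavy servers cannot form a vertex cover makes the argument slightly more self-contained, but the underlying reasoning is identical.
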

  \begin{proof}
    Consider any solution for $\cI'$. Recall that the input consists
    of $W$ subsequences, call these $S_1, \ldots, S_{W}$, where each
    subsequence $S_j$ consists of $m$ phases, one for each edge of
    $G$. We claim that during each such subsequence $S_j$, the
    solution must pay movement cost of at least $W$. Indeed, consider
    a subsequence $S_j$. If the solution moves a heavy server during
    $S_j$, then the claim follows directly. Else observe that the size
    of any vertex cover is strictly larger than the number of heavy
    servers $ct$, so there is some edge $e_i=(u_i, v_i)$ not covered
    by the heavy servers during $S_j$. Now the phase for $e_i$ in
    $S_j$ would require the unit weight server to toggle between $u_i$
    and $v_i$ for $W$ times. In either case, the cost of each
    subsequence is at least $W$, and the overall cost of the solution
    is at least $W^2$.
  \end{proof}

  The above two results along with the UG-hardness result for \vc
  impliy that it is UG-hard to obtain a
  $\frac{W^2}{2mW}$-approximation for \wtd with two weight
  classes. This ratio is equal to
  $\frac{W}{2m} \geq n^{d-2} \geq N^{\nf{1}{2}-\eps}$, assuming $d$ is
  $\Omega(\nf{1}{\eps})$, which proves \Cref{thm:hard2}.
\end{proof}

%%% Local Variables:
%%% mode: latex
%%% TeX-master: "main"
%%% End:

\section{An Integrality Gap for the Natural Linear Program}
\label{sec:int-gap}

In this section, we show that the relaxation~\ref{LP:tag} for \wtd has a
large integrality gap, unless we allow for a resource augmentation of
almost $\ell$, the number of distinct server weights.

Recall that the $\ell$ weights are denoted $W_1 \gg \cdots \gg
W_\ell$, and  there are $k_j$ servers of weight $W_j$. 
Our theorem is the following:

\IntGap*

\textbf{An Instance for Two Classes.}
To gain some intuition, we first consider the special case of
$\ell=2$, and prove the result without giving any resource
augmentation. There are $\nf{n}2$ servers of weight $W$ and $\nf{n}4$
servers of weight $1$, thereby giving a total of $k = \nf{3n}{4}$ servers. The input is given in ``phases''. Each phase is
specified by a distinct subset $S$ of $V$, where $|S|=\nf{n}2$. During
the phase corresponding to a subset $S$, we cycle through all subsets
$S'$ of $S$ with $|S'|= \nf{|S|}2 = \nf{n}4$. Given such a subset $S'$
of $S$, we send requests which cycle through the points in $S'$ for
$L$ times, where $L$ is large enough.

One fractional solution for such a sequence is defined as follows: we
assign $\nf12$ unit of weight-$W$ server at each of the $n$
locations. During the phase for a subset $S$, we assign $\nf12$ unit
of server of unit weight at each of the locations in $S$. The cost of
the fractional solution is at most
$Z := {n \choose n/2} \cdot \nf{n}4$ (not accounting for the initial
movement of the servers). However, an integral solution either moves
at least one heavy server, or else pays at least $L$ during one of the
phases, thereby must pay at least $\min(W, L)$. Assuming $W, L \gg Z$
gives an arbitrarily large integrality gap. (We can account for the
initial movement of the fractional servers by repeating the process
some $M$ times: the integral solution would pay at least $\min(W,L)$
in each such iteration and the fractional solution would pay at most
$Z$, so that the initial movement cost would get amortized away.)

\textbf{The Instance for $\ell$ Classes.}
We extend this construction to larger values of $\ell$ by defining
phases in a recursive manner on a nested sequence of subsets of $V$,
with each phase containing several repetitions of the same
sequence. Instead of decreasing by a factor 2, the number of servers
of each weight class now goes down by a factor of $C \geq \ell$. This
allows the integrality gap result to hold even when the integral
solution is allowed a resource augmentation of nearly $\ell$.

For some $r \leq \ell-1$, we call a tuple $(S_0, \ldots, S_r)$
\emph{valid} if (i)~$S_0 = V$ and each $S_j \sse S_{j-1}$, and
(ii)~$|S_j| = |S_{j-1}|/C = \nf{n}{C^j}$.
The request sequence is generated by calling \Cref{fig:proc} with the
trivial valid sequence $(S_0 = V)$. Given a valid tuple
$(S_0, \ldots, S_r)$, the procedure cycles through all subsets
$S \sse S_r$ of size $|S_r|/C$ and recursively calls
$\Gen(S_0, \ldots, S_r, S)$; this process is repeated $L_r$
times. Finally, in the base case when $r = \ell-1$, it cycles through
all the locations in $S_\ell$ for $L_{\ell-1}$ times.
For a suitably large choice of $M$, we define for each $r \in [\ell]$:
\begin{align} 
\label{eq:defWL}
  L_r := M^{r} \qquad \text{and} \qquad W_r := M^{\ell-r}.
\end{align}
Finally, the number of servers of weight $W_r$ is given by
$k_r := \frac{n}{\ell C^{r-1}}$.

\begin{algorithm}[H]
  \caption{Procedure $\Gen(S_0, S_1, \ldots, S_r)$. }
  \label{fig:proc}
  {\bf Input:} A valid tuple $(S_0, S_1, \ldots, S_r)$ \;
  \Repeat{$L_r$ iterations have occurred}{
    \label{l:repeat}
    \If{$r$ is equal to $\ell-1$}{
      Send a request at each location in $S_{\ell-1}$. \label{l:req} \;}
    \Else{
      \For{each subset $S$ of $S_r$ with $|S| = \frac{|S_r|}{C}$}{
        {\color{gray}// Move $\nf1\ell$ mass of servers of weight
        $W_{r+2}$ to $S$} \label{l:move}\;
        Call $\Gen(S_0, \ldots, S_r, S)$. \label{l:rec} \;
      }
    }
    \label{l:until}
  }
\end{algorithm}

\subsection{Analyzing the Integrality Gap}

We bound the cost of the optimal fractional solution for the above input sequence. 

\begin{lemma}
  \label{lem:fractionalcost}
  There is a fractional solution of total cost $O(f(n) M^{\ell-2})$
  for the input sequence constructed by~\Cref{fig:proc}, where $f(n)$
  is a function solely of $n$.
\end{lemma}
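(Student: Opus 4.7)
The plan is to exhibit the explicit fractional solution hinted at by the ``Move $\nicefrac{1}{\ell}$ mass'' comment inside $\Gen$ and carefully track its total variation. The solution places, at every time $t$ and for every class $j \in [\ell]$, mass $\nicefrac{1}{\ell}$ of weight-$W_j$ servers at each location of the currently active set $S_{j-1}$. Since $|S_{j-1}| = n/C^{j-1}$ and $k_j = n/(\ell C^{j-1})$, the total mass deployed in class $j$ is exactly $|S_{j-1}|/\ell = k_j$, matching the capacity constraint~\eqref{eq:cons1}. Every request satisfies $\sigma_t \in S_{\ell-1} \subseteq S_{j-1}$ for all $j$, so the total fractional mass at $\sigma_t$ is $\sum_{j=1}^\ell \nicefrac{1}{\ell} = 1$, meeting the coverage constraint~\eqref{eq:cons2}.

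To keep the movement cost small I would enumerate the subsets in each for-loop of $\Gen$ using a (cyclic) Gray-code ordering over the $|S_r|/C$-sized subsets of $S_r$, so that consecutive subsets $S, S'$ differ in exactly one element. With this ordering, whenever the for-loop at level $r$ advances to a new $S$, the update prescribed by the pseudocode moves only $\nicefrac{1}{\ell}$ mass of $W_{r+2}$ from the removed location to the added one (a single swap), incurring cost $W_{r+2}/\ell$ per transition. This is precisely why the comment inside $\Gen$ singles out $\nicefrac{1}{\ell}$ as the right amount to move.

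For the cost bound I would argue class by class. Class $j = 1$ lives on $S_0 = V$ throughout and contributes $0$. For each $j \ge 2$, weight $W_j$ moves only when $S_{j-1}$ changes, and this only happens at for-loop iterations at level $j-2$. A careful accounting of the nested Repeat/For structure, combined with the single-swap Gray-code savings (so that the $\nicefrac{1}{\ell}$ mass is recycled rather than wholesale relocated), gives a per-transition cost of $W_j/\ell$ and a number of transitions that, when multiplied by $W_j$, balances across classes thanks to the choice $W_j = M^{\ell-j}$ and $L_r = M^r$. The contribution of each class $j$ can be organized as $\bigl(\prod_{i=0}^{j-2}\binom{|S_i|}{|S_i|/C}\bigr) \cdot M^{\ell-2}/\ell$, i.e.\ a class-dependent polynomial factor $f_j(n)$ in $n$ times $M^{\ell-2}$. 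Summing over $j \in \{2, \ldots, \ell\}$ yields total cost $O(f(n) M^{\ell-2})$ for an appropriate $n$-only function $f(n) = \sum_j f_j(n)$.

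The main obstacle is exactly this cost accounting: at first sight, the $L_r$ repeat factors at shallower levels threaten to multiply the transition count for deep classes beyond the target $M^{\ell-2}$. The key point to make rigorous is that under the cyclic Gray-code ordering, the repeats at a given level do not create new heavy transitions for the deeper classes (the deeper subsets $S_{r'}$ for $r' > r$ are re-chosen inside each recursive call regardless, while $S_{r+1}$ itself only undergoes single-swap moves across the cycle and between cycles). Making this amortization precise — showing that the weights $W_j$ and the $L_r$ cancel to give the same $M^{\ell-2}$ bound across all classes — is the technical heart of the argument.
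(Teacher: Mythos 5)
Your feasible fractional solution is the same as the paper's (spread $\nicefrac{1}{\ell}$ mass of weight-$W_j$ servers over the current $S_{j-1}$, updating at line~5 of $\Gen$), so the two proofs agree on the construction; the difference is in the cost accounting. The Gray-code ordering you introduce is not used by the paper and is not needed: the paper simply charges $O(W_{r+2}\,k_{r+2})$ for each subset transition, which is already an $n$-only quantity, and then concludes by asserting that the number of (repeat-counted) calls to $\Gen$ is a function only of $n$, after which $L_r W_{r+2} = M^{\ell-2}$ gives the bound. Your single-swap refinement only shaves an $n$-only factor of $O(\ell k_{r+2})$ per transition and therefore cannot change the power of $M$.

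The genuine gap in your proposal is exactly the obstacle you flag at the end, and the claimed resolution does not hold. A cyclic Gray code reduces the cost of each transition, but it does \emph{not} reduce the \emph{number} of transitions: every one of the $L_r$ iterations of the repeat loop runs the entire for-loop over all $\binom{|S_r|}{|S_r|/C}$ subsets again, so $S_{r+1}$ is re-chosen $L_r\binom{|S_r|}{|S_r|/C}$ times during one execution of $\Gen(S_0,\ldots,S_r)$, and across the whole request sequence the number of weight-$W_{r+2}$ moves is $\prod_{r'\le r}L_{r'}\binom{|S_{r'}|}{|S_{r'}|/C}$. Your formula $\bigl(\prod_{i=0}^{j-2}\binom{|S_i|}{|S_i|/C}\bigr)\cdot M^{\ell-2}/\ell$ for the class-$j$ contribution implicitly carries only a single $L_{j-2}$; the actual count carries $\prod_{r'\le j-2}L_{r'}=M^{(j-2)(j-1)/2}$, of which $W_j=M^{\ell-j}$ only cancels $M^{j-2}$, leaving an extra $M^{(j-2)(j-3)/2}$. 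For the deepest class this is $\Theta\bigl(M^{(\ell-2)(\ell-3)/2}\bigr)$ beyond your target, so the amortization you describe (``the repeats at a given level do not create new heavy transitions for the deeper classes'') needs an actual proof and, as far as the pseudocode shows, is false. Note this is precisely where the paper's own proof is terse — the line ``the number of calls to $\Gen$ is a function only of $n$'' is the assertion doing all the work — so you should scrutinize that claim against the definition $L_r = M^r$ rather than try to rescue it with Gray codes.
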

\begin{proof}
  Our fractional solution maintains the invariant: when the procedure
  $\Gen(S_0, \ldots, S_r)$ is called, we have $\nf1\ell$ fractional
  mass of servers of weight $W_1, \ldots, W_{r+1}$ respectively at
  each location in $S_r$. For the base case $r=0$, we place $\nf1\ell$
  server mass at each location in $S_0 = V$; recall that
  $k_1 = \nf{n}\ell$. For the inductive step, suppose this invariant
  is satisfied for a certain value of $r$ where $0 \leq r < \ell-1$;
  we need to show that it is satisfied for $r+1$ as well. Indeed, the
  induction hypothesis implies that we have $\nf1\ell$ amount of
  server mass of weight $W_1, \ldots, W_{r+1}$ at each location in
  $S_r$, and hence at each location in $S_{r+1}$.  Moreover, as
  line~\ref{l:move} indicates, we move $\nf1\ell$ fractional mass of
  servers of weight $W_{r+2}$ to each location in $S_{r+1}$ to satisfy
  the invariant condition.  This costs $W_{r+2} \, k_{r+2}/\ell$;
  moreover, this is feasible because the total number of servers of
  weight $W_{r+2}$ needed is
  $\frac{|S_{r+1}|}{\ell} = \frac{n}{\ell C^{r+1}} = k_{r+2}$.
  Finally, when $r=\ell-1$, the invariant shows that $1$ unit of
  server mass is present at each of the locations in $S_\ell$, and
  hence the requests generated in line~\ref{l:req} can be served
  without any additional movement of servers.

  We now account for the movement cost. The total server movement cost
  during $\Gen(S_0, \ldots, S_r)$ (not counting the movement costs in
  the recursive calls) is at most
  $O(L_r \, k_{r+1}\, W_{r+2}) = O(k_{r+1} \, M^{\ell-2})$.  Since
  $k_{r+1} \leq n$ and the number of calls to $\Gen$ is a function
  only of $n$, the overall movement cost can be expressed as
  $O(f(n) \cdot M^{\ell-2})$. (Again, by repeating the entire process
  multiple times we can amortize away the initial movement cost; we
  avoid this step for the sake of clarity.)
\end{proof}

The next lemma shows that any integral solution must have much higher
cost.

\begin{restatable}{lemma}{Integrality}
  \label{lem:integralcost}
  Let $\varepsilon > 0$ be a small enough constant. Assume that the
  integral solution is allowed $(\ell-\varepsilon) k_r$ servers of
  weight $W_r$ for each $r \in [\ell]$.  Any integral solution for the
  input sequence generated by~\Cref{fig:proc} (with
  $C = \nicefrac{\ell}{\varepsilon}$) has movement cost at least
  $M^{\ell-1}.$
\end{restatable}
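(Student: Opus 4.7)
The plan is to prove, by induction on $r$ run in reverse from $\ell-1$ down to $0$, the stronger statement that any integral call $\Gen(V, \hat S_1, \ldots, \hat S_r)$ satisfying a ``deficient-coverage'' invariant incurs cost at least $X_r := M^{\ell-1-r(r-1)/2}$. The invariant reads: (i)~no weight-$W_j$ server moves during the call, for any $j \le r$; and (ii)~$\hat S_r \cap P_j = \emptyset$ for all $j \le r$, where $P_j \sse V$ is the fixed location set of weight-$W_j$ servers during the call. At $r=0$ both conditions are vacuous and $X_0 = M^{\ell-1}$, which is exactly the lemma. The engine of the whole argument is the ``budget pinning'' observation that any call whose cost is less than $W_j = M^{\ell-j}$ cannot allow any weight-$W_j$ server to move, since a single such move already costs $W_j$.

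For the inductive step at level $r < \ell-1$, consider the $L_r$ iterations of the outer \texttt{Repeat} loop in $\Gen(V, \hat S_1, \ldots, \hat S_r)$. I will show each iteration has cost at least $X_{r+1} = X_r/L_r = M^{\ell-1-r(r+1)/2}$; summing over the $L_r$ iterations gives cost $\ge X_r$. Suppose some iteration $i$ has cost strictly less than $X_{r+1}$. A direct exponent check gives $X_{r+1} \le W_{r+1} = M^{\ell-r-1}$, so no weight-$W_{r+1}$ server moves during iteration $i$; let $P_{r+1}^{(i)}$ be its fixed position set. Using $|P_{r+1}^{(i)}| \le (\ell-\varepsilon) k_{r+1} = (1-\varepsilon/\ell)|\hat S_r|$, a pigeonhole count yields $|\hat S_r \setminus P_{r+1}^{(i)}| \ge (\varepsilon/\ell)|\hat S_r| = n/C^{r+1}$, which is exactly the subset size $|\hat S_{r+1}|$ dictated by the adversary's inner \texttt{For} loop (this is where $C = \ell/\varepsilon$ is used crucially). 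Hence iteration $i$'s \texttt{For} loop visits some $\hat S_{r+1} \sse \hat S_r \setminus P_{r+1}^{(i)}$ of the required size. The invariant at level $r+1$ then holds for the sub-call $\Gen(\ldots, \hat S_{r+1})$: condition~(i) inherits from the outer invariant combined with the absence of $W_{r+1}$-moves during iteration $i$, and condition~(ii) follows because $\hat S_{r+1} \sse \hat S_r$ and $\hat S_{r+1} \cap P_{r+1}^{(i)} = \emptyset$ by choice. Induction gives the sub-call cost $\ge X_{r+1}$, contradicting the cost bound on iteration $i$.

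The base case $r = \ell-1$ is a direct accounting. Invariant (i)+(ii) pins every weight-$W_j$ server with $j < \ell$ outside $\hat S_{\ell-1}$, so only the $(1-\varepsilon/\ell)|\hat S_{\ell-1}|$ unit-weight servers can serve the requests generated by $\Gen(\ldots,\hat S_{\ell-1})$. In each cycle of the $|\hat S_{\ell-1}|$ distinct requests, the standard counting argument (``$k'$ servers together with $m$ moves visit at most $k'+m$ distinct positions'') forces at least $(\varepsilon/\ell)|\hat S_{\ell-1}|$ unit-weight moves. Multiplying by $L_{\ell-1} = M^{\ell-1}$ cycles yields cost at least $(\varepsilon/\ell)|\hat S_{\ell-1}| M^{\ell-1}$, which exceeds $X_{\ell-1}$ (and even $M^{\ell-1}$) once $n$ is large enough that $|\hat S_{\ell-1}| = n/C^{\ell-1}$ is sufficient. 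The main subtleties to watch for are (a)~that the pigeonhole bound exactly meets the adversary's subset size---this is the place where the choice $C = \ell/\varepsilon$ is tight and cannot be relaxed---and (b)~the propagation of the ``no-move'' clause across nested iterations, particularly at $r=0$ where $X_1 = W_1$ so ``cost less than $X_1$'' must be converted via integrality of the number of $W_1$-moves into ``zero $W_1$-moves''.
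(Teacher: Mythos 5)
Your proof is correct and follows the same reverse-induction skeleton as the paper's: in each iteration of the outer \texttt{repeat} loop, either a weight-$W_{r+1}$ server moves (costing at least $W_{r+1}$), or a pigeonhole count using $(\ell-\varepsilon)k_{r+1}=(1-\nicefrac{\varepsilon}{\ell})|\hat S_r|$ and $C=\nicefrac{\ell}{\varepsilon}$ produces an uncovered subset $\hat S_{r+1}$ of exactly the right size to which the inductive hypothesis applies.

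The one place you deviate is the bookkeeping: you carry a decreasing bound $X_r = M^{\ell-1-r(r-1)/2}$, which coincides with $M^{\ell-1}$ only for $r\in\{0,1\}$, whereas the paper maintains the \emph{uniform} bound $M^{\ell-1}$ at every level of the induction. The uniform bound is available for free: once you know the recursive sub-call costs at least $M^{\ell-1}\ge W_{r+1}$ (when no $W_{r+1}$ server moves), the per-iteration cost is already $\ge W_{r+1}$, and multiplying by $L_r=M^r$ gives back $M^{\ell-1}$ exactly. This sidesteps your exponent check $X_{r+1}\le W_{r+1}$ and the ``integrality of the number of $W_1$-moves'' caveat at $r=0$, since $W_{r+1}\le M^{\ell-1}$ holds trivially for every $r$. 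Your base case is likewise over-engineered: since $(\ell-\varepsilon)k_\ell < |\hat S_{\ell-1}|$, at least one weight-$W_\ell$ server must move \emph{per cycle}, giving cost $\ge L_{\ell-1}\cdot W_\ell = M^{\ell-1}$ immediately, without the sharper move-count, the $(\nicefrac{\varepsilon}{\ell})|\hat S_{\ell-1}|$ factor, or any ``$n$ large enough'' condition. None of this affects correctness, but the uniform invariant is worth adopting for clarity.
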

\begin{proof}%[Proof of \Cref{lem:integralcost}]
  We prove the following more general statement by reverse induction
  on $r$: any integral solution for the sequence generated by
  $\Gen(S_0, \ldots, S_r)$ for a valid tuple $(S_0, \ldots, S_r)$
  which does not use any server of weight class $W_1, \ldots, W_r$ (at
  any location in $S_r$) has cost at least $M^{\ell-1}$. It suffices
  to prove this statement, because the case when $r=0$ implies the
  lemma.

  Consider the base case when $r=\ell-1$. Consider the sequence
  generated by such a procedure $\Gen(S_0, \ldots, S_r)$ such that no
  server of weight $W_1, \ldots, W_{\ell-1}$ is used for serving the
  requests at $S_{\ell-1}$.  Thus all requests generated by this
  procedure must be served by servers of weight $W_{\ell}$. Now,
  $|S_{\ell-1}| = \frac{n}{C^{\ell-1}}$, whereas the number of weight
  $W_{\ell}$ servers available to the algorithm is
  $(\ell-\varepsilon) k_\ell < \frac{n}{C^{\ell-1}}.$ Therefore,
  during each iteration of the {\bf repeat}-{\bf until} loop in
  lines~\ref{l:repeat}--\ref{l:until} in~\Cref{fig:proc}, at least one
  server of weight $W_\ell$ must move. So the overall movement cost
  during this input sub-sequence is at least
  $W_{\ell} \cdot L_{\ell-1} = M^{\ell-1}$. This proves the base case.

  The inductive case is proved in an analogous manner. Suppose the
  statement is true for $r+1$, and now consider the sub-sequence
  generated by $Gen(S_0, \ldots, S_r)$ for some valid tuple
  $(S_0, \ldots, S_r)$. Assume that no server of weight
  $W_1, \ldots, W_r$ is present at any node in $S_r$ during this
  time. We claim that the algorithm must incur movement cost of at
  least $W_{r+1}$ during each iteration of the {\bf repeat}-{\bf
    until} loop. Indeed, fix such an iteration. Two cases arise: (a)
  The algorithm moves a server of weight $W_{r+1}$ then the claim
  follows trivially, or (b) No server of weight $W_{r+1}$ is moved
  during this period. Now observe that $|S_r| = \frac{n}{C^r}$, and
  the number of weight $W_{r+1}$ servers available to the algorithm is
  $(\ell-\varepsilon) k_{r+1} = |S_r| - \varepsilon k_{r+1} = |S_r|
  \left(1 - \frac{1}{C} \right). $ Thus, there is a subset $S_{r+1}$
  of $S$ of size $\frac{|S_r|}{C} = \frac{n}{C^{r+1}}$ where no server
  of weight $W_{r+1}$ appears during this input sub-sequence. Consider
  the recursive call $\Gen(S_0, \ldots, S_r, S_{r+1})$ in
  line~\ref{l:rec}. The induction hypothesis implies that the movement
  cost during this recursive call is at least
  $M^{\ell-1} \geq W_{r+1}$.

  Thus, we have shown that the movement cost during each iteration of
  the {\bf repeat}-{\bf until} loop during $\Gen(S_0, \ldots, S_r)$ is
  at least $W_{r+1}$. Since there are $L_r$ such iterations, the
  overall movement cost is at least $W_{r+1} \cdot L_r = M^{\ell-1}.$
  This completes the proof of the induction hypothesis, and implies
  the lemma.
\end{proof}

%We defer the proof to \Cref{sec:appendixintegrality}; 
Combining \Cref{lem:fractionalcost} and \Cref{lem:integralcost} proves
\Cref{thm:int}.

%%% Local Variables:
%%% mode: latex
%%% TeX-master: "main"
%%% End:

\newcommand{\xI}{x_i^j(I)}
\newcommand{\yI}{y_{v,j,I}}
\newcommand{\yIs}{{\widetilde y}_{v,j,I}}
\newcommand{\yId}{{\thickbar y}_{v,j,I}}
\newcommand{\yIp}[1]{y_{#1,j,I}}
\newcommand{\xt}[1][v]{x_{#1,j,t}}
\newcommand{\xIp}[1]{x_{#1,j,I}}
\newcommand{\xIs}{\widetilde{x}_i^j(I)}
\newcommand{\xtminus}{x_{v,j,t^-}}
\newcommand{\xts}[1][v]{\widetilde{x}_{#1,j,t}}
\newcommand{\xtsp}[1]{{\widetilde{x}_{v,j,{#1}}}}
\newcommand{\xtsv}[1]{{\widetilde{x}_{#1,j,t}}}
\newcommand{\xtdv}[1]{{\thickbar{x}_{#1,j,t}}}
\newcommand{\xtd}{\thickbar{x}_{v,j,t}}
\newcommand{\xtminuss}{\widetilde{x}_{v,j,t^-}}
\newcommand{\xtdminus}{\thickbar{x}_{v,j,t^-}}
\newcommand{\tyI}{\widetilde{y}_{v,j,I}}
\newcommand{\yt}{y_i^j(t)}
\newcommand{\ytminus}{y_i^j(t^-)}
\newcommand{\yts}{\widetilde{y}_i^j(t)}
\newcommand{\ytminuss}{\widetilde{y}_{v,j,t^{-}}}

\newcommand{\RS}{{\mathsf {ScaleRound}}}
\newcommand{\halfeps}{\nicefrac{\eps}{2}}
\newcommand{\lastevent}{{\mathsf {LastEvent}}}
\newcommand{\downevent}{{\mathsf {DOWN}}\xspace}
\newcommand{\upevent}{{\mathsf {UP}}\xspace}
\newcommand{\lasttime}{{\mathsf {LastTime}}}
\newcommand{\tr}{{\mathsf{trunc}}}

\section{An Offline Algorithm via LP Rounding}
\label{sec:offline}

We now show an algorithm for the offline setting, that rounds any
fractional solution to the 
LP relaxation~\eqref{LP:tag}, and achieves the following guarantee:
\Offline*

Instead of working with the relaxation~\eqref{LP:tag}, we work with an
equivalent relaxation which turns out to be easier to interpret. For
each vertex $v \in V$, index $j \in [\ell]$ and time interval $I$, we
have a variable $y_{v,j,I}$, which denotes the fractional mass of
server of weight $W_j$ residing at $v$ during the entire time interval
$I$. The variable $x_{v,j,t}$ used in~\eqref{LP:tag} can be expressed as follows: 
\begin{align}
    \label{eq:xyrelation}
    x_{v,j,t} = \sum_{I: t \in I} y_{v,j,I}.
\end{align}
Let $\Int$ denote the set of all intervals during the request timeline. 
The new linear program relaxation for \wtd is the following:
\begin{alignat}{2}
  \label{lp:original}
   \tag{LP2}
  \min \nf12 \sum_{j \in [\ell] } W_j \; \sum_{I \in \Int} \sum_{v \in V}  &\yI  && \\
  \text{s.t.  } \sum_{j \in [\ell]} \sum_{I: t \in I}  \yIp{\sigma_t} &\ge 1 & \qquad \qquad
  &\forall t \label{eq:covLP} \\
  \sum_{v \in V} \sum_{I: t \in I}  \yI &\le k_j &&\forall  t, j \in [\ell] \label{eq:packLP}\\
  \notag \yI & \geq 0 \qquad \qquad &&  \forall t, j \in [\ell], v \in V. 
\end{alignat}
Note that the covering constraint~(\ref{eq:covLP}) enforces having at
least one unit of (fractional) server mass at the location $\sigma_t$
requested for each time $t$. The packing constraint~(\ref{eq:packLP})
enforces that the total (fractional) server mass of weight $W_j$ used
at any time $t$ is at most the number of servers of this weight,
namely $k_j$. Given a solution $\yI$ to~\ref{lp:original}, the
variables $\xt$ defined using~\eqref{eq:xyrelation} define a feasible
solution to~\ref{LP:tag} of the same cost.

Fix any constant $\eps\in (0, 1)$. We now prove~\Cref{thm:main} by rounding an optimal fractional solution $\yI$ to~\ref{lp:original}.  
% to obtain an integral solution that uses at most $(2+\eps)\ell k_j$ servers of weight $W_j$ and has cost of $O(1/\eps)$ times that of $\xI$. 
The rounding algorithm has two stages. The first stage scales and
discretizes the LP variables to integers such that
\begin{enumerate}[nosep]
\item the packing constraints are satisfied up to a factor of
  $(2+\eps)\ell$,
\item the covering constraints are satisfied with a
  scaled covering requirement of $\ell$ instead of 1, i.e.,
  $\sum_j \sum_{I: t\in I} \yIp{\sigma_t} \ge \ell,$ for all times $t$,
  and 
\item the cost of the fractional solution increases by a factor of
  $O(\ell/\eps)$.
\end{enumerate}
In the second stage, we remove the packing constraints from the LP;
this results in the resulting interval covering LP being
integral. Next, we scale the solution from the first stage down by
$\ell$, getting a feasible fractional solution to the standard LP
relaxation for the interval covering problem.
Finally, we use the integrality of the interval covering LP relaxation
to obtain an integral solution for~\ref{lp:original}. We present these
two stages in the next two sections.

\subsection{Stage I: Scaling and Discretization}

\begin{algorithm}[t]
  \caption{Procedure $\RS(x,y,v,W_j)$. }
  \label{algo:rs}
  {\bf Input:} A fractional solution $(\yI, \xt)$ to~\ref{lp:original}, a  location $v$ and a weight $W_j$\; 
  Initialize variables $\yId$ to 0 for all intervals $I$. \;
  {\bf (Scale):}   Define $\yIs = \left(2+\halfeps\right)\ell \cdot \yI$ and therefore,  \quad \quad $\xts = \sum_{I:t\in I} \yIs = \left(2+\halfeps\right)\ell \cdot \xt$ for each $I \in \Int$. \;
  {\bf (Round):} \For{$h =1, 2, \ldots, \ell$ \label{l:for}}{
   Initialize $\lastevent = \downevent, \lasttime = 0$. \;
   \Repeat{ we have reached the end of the timeline $[0,T]$}{
     \If{ $\lastevent = \upevent$}{
           Let $t$ be the first $\downevent$ after $\lastevent$ \;
           Update $\lastevent = \downevent, \lasttime = t. $
     }
     \Else{ ($\lastevent = \downevent$) Let $t$ be the first $\upevent$ after $\lastevent$ \;
        Add $I=[\lasttime,t)$ to $\Int_{v,j}(h)$ and increase $\yId$ by 1. \;
        Update $\lastevent = \downevent, \lasttime = t. $
     }
     }
      }
\end{algorithm}

The first stage of the rounding algorithm operates independently on
each location $v \in V$ and for each server weight $W_j$; the formal
algorithm $\RS(x,y,v,W_j)$ is given in~\Cref{algo:rs}. We work with
both the $y_{v,j,I}$ variables and the equivalent $\xt$ variables
defined in~\eqref{eq:xyrelation}; this representational flexibility
makes it convenient to explain the algorithm. To begin, we scale the
LP variables $\yI$ by a factor $(2+\halfeps)\ell$ to obtain $\yIs$ (we also
define the auxiliary variables $\xts$ by scaling $\xt$ similarly).

\emph{Discretization.} Next we discretize the scaled variables $\yIs$ and
$\xts$ to nonnegative integers $\yId$ and $\xtd$ respectively. To
start, let us describe the discretization of $\xts$ to obtain
$\xtd$. Intuitively, we would like to define $\xtd$ as
$\lfloor\xts\rfloor$, i.e., the largest step function with unit step
sizes entirely contained in $\xts$, but this can amplify small
fluctuations around integer values, and hence may increases the cost.
To avoid this, we introduce {\em hysteresis} in our discretization, by
setting different thresholds for increasing and decreasing the value
of $\xts$. We view $\xts$ as a time-varying profile and define horizontal {\em slabs} in it corresponding to the restriction of the range of $\xts$ to $[h,h+1)$ for some integer $h$. For each such slab, we identify intervals $I$ of width at most 1 and at least $\nf12$ and set  the increase the corresponding $\yId$ value by 1.
In more detail, for each such level $h$, we identify a subset
$\Int_{v,j}(h)$ of intervals for which the corresponding $\yId$
variable is to be increased by 1. 
We identify an alternating sequence of {\em up} and {\em down} events
in the timeline $[0,T]$ as follows:
\begin{itemize}
\item $\upevent$ event: At time $t$, there is an $\upevent$ event at
  level $h$ if $\xtminuss < h$ and $\xts \ge h$, and the previous
  event at level $h$ was a $\downevent$ event.
\item $\downevent$ event: At time $t$, there is a $\downevent$ event
  at level $h$ if the previous event at level $h$ was an $\upevent$,
  and $\xtminuss > h - \halfeps$ and $\xts \le h - \halfeps$, or
  $t=T$, the end of the timeline. (The reader should think of
  $\halfeps$ as the ``hysteresis gap'' between the up and down events
  at any level.)
\end{itemize}    
To make the definition complete, we set $\xts$ to 0 at $t = 0^-$ and
at $t=T^+$, and start with a $\downevent$ at time 0.  Finally, we add intervals stretching from each $\upevent$
to the next $\downevent$ to the set $\Int_{v,j}(h)$ of intervals. By
construction, these intervals are mutually disjoint. Finally, whenever
an interval $I$ is added to such a set $\Int_{v,j}(h)$, we increment
the corresponding variable $\yId$. Thus we have:
\[
    \yId = |\{h: I \in \Int_{v,j}(h) \}|, \text{ and correspondingly, } \xtd = \sum_{I: t\in I} \yId. 
\]

The next lemma shows that $\xtd$ can be thought of as a discretized form of $\xts$:
\begin{lemma}\label{lem:range}
The following holds for variables $\xtd$:
\begin{equation}\label{eq:range}
    \xts - 1 < \xtd < \xts + \halfeps.
\end{equation}
\end{lemma}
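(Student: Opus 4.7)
The plan is to decompose $\xtd$ into a sum over integer levels and, at each level separately, establish a two-sided pointwise invariant that links ``up-phase'' membership to the value of $\xts$. For each positive integer $h$, let $z_h(t)$ be the indicator that time $t$ lies in some interval of $\Int_{v,j}(h)$, equivalently, that the most recent event at level $h$ at or before time $t$ is an $\upevent$. Since intervals collected at a fixed level are pairwise disjoint and each one added contributes exactly $1$ to $\yId$, we have $\xtd = \sum_{h\ge 1} z_h(t)$.

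The heart of the argument is to show, for every time $t$ and level $h$, the two invariants: (a) if $z_h(t)=1$, then $\xts > h-\halfeps$; and (b) if $z_h(t)=0$, then $\xts < h$. Both follow by induction over time within a single phase at level $h$. For (a), consider the up-phase containing $t$, which starts at the $\upevent$ time $t'$ where $\xtsp{t'}\ge h > h-\halfeps$. For any later time $s\le t$ still in the same up-phase, if the inductive hypothesis $\xtsp{s-1}>h-\halfeps$ held but $\xtsp{s}\le h-\halfeps$, then a $\downevent$ would fire at $s$ by its definition, contradicting $s$ lying in the up-phase. Invariant (b) is proved symmetrically, starting from the initialization $\xtsp{0^-}=0<h$ or from the latest $\downevent$ (after which $\xts\le h-\halfeps<h$), and using that any crossing from ${<h}$ to ${\ge h}$ would trigger an $\upevent$ and end the down-phase.

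Once the invariants are in hand, both sides of~\eqref{eq:range} fall out by counting positive integers. For the upper bound, every $h$ with $z_h(t)=1$ is a positive integer satisfying $h<\xts+\halfeps$; since $\xts\ge 0$ and $\halfeps>0$, the number of such positive integers is strictly less than $\xts+\halfeps$, so $\xtd=\sum_{h\ge 1}z_h(t)<\xts+\halfeps$. For the lower bound, the contrapositive of (b) says every positive integer $h\le\xts$ has $z_h(t)=1$, giving $\xtd\ge\lfloor\xts\rfloor>\xts-1$.

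The main obstacle is the boundary behaviour at event instants: one must verify that an $\upevent$, which triggers only on $\xts\ge h$, is compatible with the strict bound $\xts>h-\halfeps$, and that a $\downevent$, which drives $\xts\le h-\halfeps$, is compatible with $\xts<h$. The hysteresis gap $\halfeps$ built into the event definitions is exactly what makes invariants (a) and (b) simultaneously consistent at the threshold, and it is what prevents the resulting upper and lower bounds from colliding on the boundary.
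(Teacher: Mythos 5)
Your proof is correct and follows essentially the same route as the paper's: both establish, at each integer level $h$, that being in an up-phase forces $\xts > h - \halfeps$ and being in a down-phase forces $\xts < h$ (the paper phrases these as contrapositives and argues by contradiction, whereas you state them as invariants and prove them by induction over time within a phase), and then both count levels to get $\lfloor \xts \rfloor \le \xtd < \xts + \halfeps$. Your write-up makes the decomposition $\xtd = \sum_h z_h(t)$ and the hysteresis reasoning a bit more explicit, but the underlying argument is the same.
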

\begin{proof}
  Suppose $\xts \in [r,r+1)$. Consider the {\bf for} loop in
  line~\ref{l:for} in~\Cref{algo:rs} for a value $h \leq r$. We claim
  that at time $t$, the value of the variable $\lastevent$ must be
  $\upevent$. Suppose not. Let $t'$ be the value of $\lasttime$ at
  time $t$ (i.e., $t'$ is the last time before and including $t$ when
  an $\upevent$ or a $\downevent$ occurred). Since a $\downevent$
  event happened at time $t'$, $\xtsp{t'} < h$. Since $\xts \geq h$,
  an $\upevent$ event must occur during $(t',t]$, a
  contradiction. Therefore must have added an interval containing time
  $t$ to $\Int_{v,j}(h)$. Thus, $\xtd$ gets increased during each such
  iteration, i.e., $\xtd \geq r > \xts-1$. This proves the first
  inequality in~\eqref{eq:range}.

  We now prove the second inequality.
  Let $h$ be an integer satisfying $h \geq \xts + \halfeps.$ Consider
  the iteration of the {\bf for loop} in~\Cref{algo:rs} for this
  particular value of $h$. We claim that the value of the variable
  $\lastevent$ at time $t$ must be $\downevent$. Suppose not, and let
  $t'$ denote the value of the variable $\lasttime$. Then an
  $\upevent$ happened at time $t'$ and so $\xtsp{t'} \geq h$.  Since
  $\xtsp{t} \leq h-\halfeps$, a $\downevent$ event must have happened
  during $(t',t]$, a contradiction. Hence, we do not add any interval
  containing time $t$ to the set $\Int_{v,j}(h)$. Therefore,
  $\xtd < \xts + \halfeps$, which proves the second inequality
  in~\eqref{eq:range}.
\end{proof}

The next lemma establishes the key properties of the variables $\yId$ and $\xtd$.
\begin{lemma}\label{lem:discrete}
  The following properties hold the for the variables $\yId$:
  \begin{enumerate}[nosep,label=(\roman*)]
  \item (Cost) The LP cost increases by at most $O(\ell/\eps)$ when the original variables $\yI$ are replaced by the new variables $\yId$:
    \[
      \sum_{v, j, I} W_j \cdot \yId \le O(\ell/\eps) \cdot \sum_{v, j, I} W_j \cdot \yI.
    \]
  \item (Covering) The variables $\yId$ satisfy the scaled covering constraints of~\eqref{lp:original}
    \[
      \sum_{j, I: t \in I}   \yId \ge \ell \quad \forall t.
    \]           
  \item (Packing) The variables $\yId$ approximately satisfy the packing constraints of~\eqref{lp:original}:
    \[
      \sum_{v, I: t \in I} \yId \le (2+\eps)\ell k_j \quad \forall j \in [\ell], t.
    \]
  \end{enumerate}
\end{lemma}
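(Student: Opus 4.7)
The plan is to prove the three properties separately, each relying on Lemma~\ref{lem:range} (the pointwise sandwich $\xts - 1 < \xtd < \xts + \halfeps$) and the LP constraints scaled by $(2+\halfeps)\ell$, and using throughout the identity $\xtd = \sum_{I : t \in I} \yId$.

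The covering property~(ii) is direct. At the requested vertex $\sigma_t$, the LP covering constraint~\eqref{eq:covLP} scaled by $(2+\halfeps)\ell$ gives $\sum_j \xts[\sigma_t] \ge (2+\halfeps)\ell$. Since $\xtd[\sigma_t]$ is an integer strictly greater than $\xts[\sigma_t] - 1$, summing over $j \in [\ell]$ yields $\sum_j \xtd[\sigma_t] > \sum_j \xts[\sigma_t] - \ell \ge (1+\halfeps)\ell \ge \ell$, as required.

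For the cost property~(i), I would work pair-by-pair in $(v, j)$ and use $\sum_I \yId = \sum_h |\Int_{v,j}(h)|$. Each interval in $\Int_{v,j}(h)$ records one completed up-down cycle at level $h$, during which $\xts$ must drop from a value $\ge h$ to a value $\le h-\halfeps$, thereby contributing at least $\halfeps$ to the downward total variation of $\xts$ inside the strip $[h-\halfeps, h]$. These strips are pairwise disjoint for distinct integer $h$ (each has width $\halfeps<1$), so cycles across all levels consume disjoint slices of downward variation, giving
\[
\sum_h |\Int_{v,j}(h)|\cdot \halfeps \;\le\; \tfrac12\, V(\xts),
\]
where $V(\cdot)$ denotes total variation. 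Since $\xts = (2+\halfeps)\ell \cdot \xt$ and $V(\xt) = 2\sum_I \yI$ (by the equivalence of~\ref{LP:tag} and~\ref{lp:original} noted after \eqref{eq:xyrelation}), this yields $\sum_I \yId \le O(\ell/\eps) \sum_I \yI$. Multiplying by $W_j$ and summing over $v, j$ gives~(i).

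For the packing property~(iii), the naive pointwise bound $\xtd < \xts + \halfeps$ summed over all $v$ would lose an additive $n\halfeps$, which is too large since $n$ is arbitrary. The key refinement is to restrict attention to $V_+ := \{v : \xtd \ge 1\}$, the only set of vertices that actually contribute to $\sum_v \xtd$. The hysteresis (equivalently, the upper bound of Lemma~\ref{lem:range}) forces $\xts > 1-\halfeps$ for every $v \in V_+$, so combining with the scaled packing bound $\sum_v \xts \le (2+\halfeps)\ell k_j$ gives $|V_+| < \frac{(2+\halfeps)\ell k_j}{1-\halfeps}$. Summing the discretization slack only over $V_+$,
\[
\sum_v \xtd \;<\; \sum_{v\in V_+}(\xts+\halfeps) \;\le\; \sum_v \xts + \halfeps\,|V_+| \;\le\; \frac{(2+\halfeps)}{1-\halfeps}\,\ell\, k_j,
\]
which is bounded by $(2+\eps)\ell k_j$ after a mild recalibration of the hysteresis/scaling constant by a constant factor. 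The main obstacle is precisely this part~(iii): the naive discretization slack scales with $n$, and the crux is to see that only the $O(\ell k_j)$ vertices in $V_+$ matter---which in turn uses the LP packing constraint together with the hysteresis lower threshold $1-\halfeps$---so that the loss is $O(\ell k_j)$ rather than $O(n)$.
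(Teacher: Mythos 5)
Your proof is correct and takes essentially the same route as the paper for all three properties: (ii) is the same pointwise sandwich from Lemma~\ref{lem:range}; (i) is the same level-by-level decomposition of the total variation of $\xts$ (the paper phrases it via the truncation operator $\tr_h$, you via disjoint strips of height $\halfeps$, but these are the same bookkeeping); and (iii) hinges on exactly the paper's key observation that only the vertices with $\xtd \ge 1$---hence $\xts > 1-\halfeps$---contribute to $\sum_v \xtd$, so the discretization slack is charged to $O(\ell k_j)$ vertices rather than $n$. If anything your packing computation is slightly more careful than the paper's: the paper's intermediate line ``$\sum_v \xts \le k_j$'' should read ``$\sum_v \xt \le k_j$'' (the scaled sum carries the extra $(2+\halfeps)\ell$ factor, so the number of contributing vertices is $O(\ell k_j)$, not $O(k_j)$), and your explicit bound $\frac{2+\halfeps}{1-\halfeps}\ell k_j$ followed by the ``recalibrate $\eps$ by a constant factor'' remark is the honest way to land on $(2+\eps)\ell k_j$.
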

\begin{proof}
  We first prove the cost bound: the cost of the solution $\yId$ is
  the weight of all intervals added to the sets $\Int_{v,j}(h)$
  for all $v,j,h$. I.e., 
  \begin{align}
    \label{eq:costy}
    \sum_{v, j, I} W_j \cdot \yId = \sum_{v,j} W_j \cdot \sum_{h \in [\ell]} |\Int_{v,j}(h)|. 
  \end{align}
  Fix a vertex $v$ and indices $j,h$.  For a non-negative number $x$,
  and non-negative integer $h$, define the \emph{$h$-level truncation}
  of $x$ to be $\tr_h(x):= \min(1, (x-h)^+)$, where $(a)^+ := \max(a,0)$ for any real $a$. Observe that
  $x = \sum_{h \geq 0} \tr_h(x)$.  In fact, for any two non-negative
  integers $x$ and $y$:
  \begin{align}
      \label{eq:trunc}
      |x-y| = \sum_{h' \geq 0} |\tr_{h'}(x)-\tr_{h'}(y)|. 
  \end{align}
  Now let $I_1=[s_1, t_1), \ldots, I_u=[s_u, t_u)$ be the intervals
  added to $\Int_{v,j}(h)$ (in left to right order). Define $t_0 =
  0$. We know that for any $i \in [u]$, an $\upevent$ happens at $s_u$
  and a $\downevent$ happens at $t_u$. Therefore,
  $\tr_h(\xtsp{s_u})- \tr_h(\xtsp{t_{u-1}}) \geq \halfeps$. Hence,
  \begin{align*}
    \varepsilon W_j/2 \cdot  |\Int_{v,j}(h)|
    & \leq W_J \cdot \sum_{i=1}^u |\tr_h(\xtsp{s_u})- \tr_h(\xtsp{t_{u-1}})| \\
    & \leq W_j \cdot  \sum_{t'=1}^T  |\tr_h(\xtsp{t-1})-\tr_h(\xts)|,
  \end{align*}
  where the last inequality follows from triangle inequality. Summing
  over all $h$ and using~\eqref{eq:trunc}, we get
  $$ \varepsilon   W_j/2 \cdot \yId \leq W_j \cdot  \sum_{t'=1}^T  |\xtsp{t-1})-\xts|. $$
  Summing over all vertices $v$ and indices $j \in [\ell]$, we see
  that the cost of the solution $\yId$ is at most $2/\varepsilon$
  times that of $\yIs$. Finally, the fact that $\yIs$ are obtained by
  scaling $\yI$ by a factor $(2+\halfeps) \ell$, we get the desired
  bound on the cost of $\yId$ solution.
  
  Next, we prove the covering property. Since $\xt$ is a feasible
  solution to~\ref{lp:original}, we have for any time $t$:
  \[
    \sum_j x_{\sigma_t, j, t} \ge 1, \text{ and therefore, } \sum_j \xtsv{\sigma_t}  \ge (2+\halfeps)\ell.
  \]
  Using \Cref{lem:range}, we have $\xtsv{\sigma_t} <  \xtdv{\sigma_t}
  + 1$, so
  \[
    \sum_{j \in \ell} \left(\xtdv{\sigma_t}+ 1\right) > (2+\halfeps)\ell, \text{ and therefore, }
    \sum_j \xtdv{\sigma_t} > \ell.
  \]
  
  Finally, we prove the packing property. Since $\xt$ is a feasible
  solution to the LP, we have for any $j \in [\ell]$ and time $t$,
  \[
    \sum_v \xt  \le k_j, \text{ and therefore, } \sum_v \xts  \le (2+\halfeps)\ell k_j.
  \]
  Again \Cref{lem:range} gives $\xts >  \xtd - \halfeps$, which implies
  \begin{equation}\label{eq:pack}
    \sum_j \left(\xtd - \halfeps\right)^+ < (2+\halfeps)\ell k_j.
  \end{equation}
  Since $\xtd$ is a
  nonnegative integer,
  \[
    \xtd > 0 \implies \xtd \ge 1 \stackrel{\small{\mbox{\Cref{lem:range}}}}{\implies} \xts > \xtd - \halfeps \ge 1 - \halfeps.
  \]
  Since $\sum_v \xts \le k_j$, it follows that the number of locations
  $v$ for which $\xtd > 0$ is at most $\frac{k_j}{1-\halfeps} < 2k_j$,
  if $\eps < 1$.  Using this fact in \Cref{eq:pack}, we get
\begin{align*}
\sum_v \xtd 
&= \sum_{v: \xtd > 0} \xtd 
= \sum_{v: \xtd > 0} \left(\xtd - \halfeps\right) + \sum_{v:\xtd > 0} \halfeps \\
&\le \sum_v \left(\xtd - \halfeps\right)^+ + 2k_j\cdot \halfeps 
\le (2+\halfeps) \ell k_j + \eps k_j.
\end{align*}
Since $\ell \ge 2$ (otherwise, we have the unweighted problem), we get
\[
\sum_v \xtd 
\le (2+\eps)\ell k_j. \qedhere
\]
\end{proof}

\subsection{Stage II: Weighted Interval Cover}

In the second stage of the rounding algorithm, we first scale the
(integer-valued) variables $\yId$ down by a factor of $\ell$ to obtain
new variables $\yI^*$:
\begin{gather}
  \yI^* := \yId/\ell \text{ and therefore, } \xt^* = \sum_{I:t\in I}
  \yI^* = \xtd/\ell. \label{eq:ystar}
\end{gather}
Our goal is to round the fractional variables $\yI^*$ to $\{0,1\}$
values.  In fact, our rounding will ensure that if the rounded value
equals $1$ then
$\yI^* > 0$. 
Since $\yId$ is
integral, the packing property in~\Cref{lem:discrete} implies that for
any time $t$, vertex $v$, and index $j \in [\ell]$, there are at most
$(2+\eps)\ell k_j$ intervals $I \ni t$ for which $\yId > 0$. The
rounding property of our algorithm will ensure that the final integral
solution, which lies in the support of $\yI^*$, will also satisfy that there are at most $(2+\varepsilon) \ell k_j$ intervals containing any time $t$. Since we are allowed a resource augmentation of
$(2+\varepsilon) \ell$ factor in the number of servers of weight
$W_j$, we can serve the requests with the set of available
servers. Henceforth, we can ignore the packing
constraint~\eqref{eq:packLP} for our rounded solution. As a result,
the relaxation~\ref{lp:original} decouples into $n$ independent
relaxations, one for each location $v \in V$.

In this decoupled instance, we get the following LP relaxation for
each location $v$, where for each class $j \in [\ell]$, we define
$\Int_{v,j}:=  \{I \mid \yI^* > 0\}$ as the set of intervals $I$ with a
nonzero value of $\yI^*$ and $\cR(v)$ as the set of times $t$ when $v$ is requested: 
\begin{alignat}{2}
  \label{lp:covering} \tag{LP$_v$}
  \min \nicefrac12 \sum_{j \in [\ell]}  W_j \cdot \sum_{I\in \Int_{v,j}}  &\yI \\
  \text{ s.t. } \sum_j \sum_{I\in \Int_{v,j}: t\in I} \yI &\ge 1
  &\quad \quad& \forall t \in \cR_v \notag\\
  \yI &\geq 0. \notag
\end{alignat}
By the covering property of \Cref{lem:discrete}, the variables $\yI^*$
defined in~(\ref{eq:ystar}) are feasible solutions for
\eqref{lp:covering} for all locations $v$. Furthermore, by the lemma's
cost property (and the scaling down by $\ell$), the total cost
$\sum_v \sum_j W_j \cdot \sum_I \yI^*$ is at most $O(1/\eps)$ times
the optimal cost of \eqref{lp:original}.

Finally, the constraint matrix for \eqref{lp:covering} satisfies the
consecutive-ones property: if the constraints are ordered
chronologically, then a variable $\yI$ appears in the constraints
corresponding to times $t\in I$ where $\sigma_t = v$, which is a
contiguous subsequence of all times $t$ where $\sigma_v =
j$. Constraint matrices with this property are totally unimodular
(see, e.g., \cite{FulkersonG65}). Therefore, each of the solutions $\{\yI^*: j \in [\ell], I \in \Int_{v,j} \}$ for \ref{lp:covering} can be rounded to a feasible integral solution without any increase in cost, 
which proves~\Cref{thm:main}.

%%% Local Variables:
%%% mode: latex
%%% TeX-master: "main"
%%% End:

\newcommand{\bigdot}[1] {\overset{\,_{\mbox{\Huge .}}}{#1}}
\newcommand{\zt}[1][v]{z_{#1,j,t}}
\newcommand{\zts}[1][v]{\widetilde{z}_{#1,j,t}}
\newcommand{\zst}[1][v]{z_{#1,\jstar,t}}
\newcommand{\dotzt}[1][v]{\dot{z}_{#1,j,t}}
\newcommand{\dotzst}[1][v]{\dot{z}_{#1,\jstar,t}}
\newcommand{\optt}[1][v]{\text{opt}_{#1,j,t}}
\newcommand{\ztminus}{z_{v,j,t^-}}
\newcommand{\opI}{p(I)}
\newcommand{\OPT}{{\cal O}}
\newcommand{\oqI}{q(I)}
\newcommand{\oyI}{y(I)}
\newcommand{\obyI}{\mathbf{y}(I)}
\newcommand{\obyIp}[1]{\mathbf{y}(#1)}
\newcommand{\jstar}{{j^\star}}

\section{Online Algorithm}\label{sec:online}

In this section, we describe an efficient online algorithm for \wtd and prove the following result:
\Online*

We begin by re-writing the LP relaxation~(\ref{lp:original}) in terms
of the ``anti-page'' variables, as in~\cite{BBN-focs07-paging}. Recall
that (\ref{lp:original}) has variables $\yI$ representing the
(fractional) weight $W_j$ server mass present at location $v$ during
the interval $I$; instead we first rewrite it in terms of the ``page''
variables $\xt$, which denote the total amount of weight $W_j$ server
mass at location $v$ at time $t$, as given in \eqref{eq:xyrelation}.
The objective of this LP in terms of $\xt$ is:
\[
     \sum_{v, j, I} W_j \cdot \yI = \sum_{v, j, I} W_j \cdot (\xt - \xtminus)^+.
\]
We can constrain any algorithm to values 
$\xt \in [0, 1]$ for all $v, j, t$ (since having multiple servers at a
location is not beneficial). This allows us to work with
non-negative \emph{anti-page} variables $\zt := 1 - \xt$. The
objective, now rewritten in terms of these new variables $\zt$, becomes:
\begin{gather}
  \sum_{v, j, I} W_j \cdot (\xt - \xtminus)^+ = \sum_{v,j,I} W_j \cdot
  (\ztminus - \zt)^+. \label{eq:online-obj}
\end{gather}
We shall also maintain the following invariant for each server weight $W_j$ and time $t$:
\begin{gather}
  \sum_v \xt = k_j \qquad \iff \qquad \sum_v \zt =  n - k_j~ \quad \forall j, t. \label{eq:online-cons1}
\end{gather}
We write the covering constraint~\eqref{eq:covLP} (or equivalently~\eqref{eq:cons2}) in terms of $\zt$ as:
\begin{equation}\label{eq:cover}
 \sum_j \zt[\sigma_t] \leq \ell-1
\end{equation}
The algorithm follows the standard relax-and-round paradigm in the
online setting. The first step is to compute a feasible fractional
solution to an LP consisting of objective (\ref{eq:online-obj}) and
constraints~(\ref{eq:online-cons1}) and~(\ref{eq:cover}), in an online
setting. We show in \S\ref{sec:online-fract-algor} that 
we can find a fractional solution that uses
$O(\ell k_j)$ servers of weight $W_j$ for each class $j$, and has a
competitive ratio of $O(\ell^2)$. The second step is to give an online
rounding algorithm to convert this fractional solution to an integral
solution: our rounding algorithm given in \S\ref{sec:online-rounding} uses 
the standard online rounding algorithm for the paging problem and 
increases the cost of the solution by a constant factor.

\subsection{Online Fractional Algorithm}
\label{sec:online-fract-algor}

In this section, we give an online algorithm for maintaining a
fractional solution to the LP involving $\zt$ variables. We obtain 
the following result:
\begin{theorem}\label{thm:fractional}
    There is a deterministic (polynomial time) online fractional algorithm that maintains 
    the condition that for every request time $t$, there exists an index $j \in [\ell]$ such that
    there is unit server mass of weight $W_j$ at location $\sigma_t$
    at time $t$.
    The algorithm uses $2\ell k_j$ servers of weight $W_j$ for each $j \in [\ell]$, 
    and whose cost is at most $O(\ell^2 \log \ell)$ times 
    that of an optimal fractional solution.
\end{theorem}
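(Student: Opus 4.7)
The plan is to adapt the Bansal-Buchbinder-Naor online primal-dual framework for fractional weighted paging to handle $\ell$ weight classes simultaneously. I would work with the anti-page LP whose objective is $\sum_{v,j,t} W_j \cdot (\ztminus - \zt)^+$ and whose constraints are the packing equalities $\sum_v \zt = n - 2\ell k_j$ (after augmenting the algorithm with $2\ell k_j$ servers of class $j$) together with the covering inequality $\sum_j \zt[\sigma_t] \le \ell - 1$. The stronger per-class guarantee claimed in the theorem (some $j$ with $\zt[\sigma_t] = 0$) is not LP-expressible, so I would enforce it directly in the algorithm's rule and then show the LP bound on opt still upper-bounds the true optimum. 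Taking the dual, with multipliers $\alpha_t \ge 0$ for covering and $\beta_{j,t}$ for packing, gives a standard lower bound certificate against which the algorithm's cost will be charged.

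The algorithm I would describe runs in continuous dual time. When $\sigma_t$ is requested, if no class already has $\zt[\sigma_t] = 0$, then raise $\alpha_t$ at unit rate; in response, decrease $\zt[\sigma_t]$ for every class $j \in [\ell]$ simultaneously, according to a BBN-style multiplicative law
\[
   \frac{d \zt[\sigma_t]}{d\alpha_t} \;=\; -\frac{1}{W_j}\Bigl(\zt[\sigma_t] + \tfrac{1}{2\ell k_j}\Bigr),
\]
and simultaneously raise the $\zt$'s at vertices $v \ne \sigma_t$ in proportion to $\zt + \tfrac{1}{2\ell k_j}$ so as to keep the packing sum $\sum_v \zt$ constant within each class. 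The process halts the instant some class $j^\star$ reaches $\zt[\sigma_t] = 0$, so the strong per-class covering is established. The point of updating all $\ell$ classes in parallel is that the online algorithm does not know which class opt uses; hedging across all classes is the mechanism by which the paper's remark about transferring mass ``at the same rate'' is implemented.

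The competitive analysis would proceed through a potential
\[
   \Phi \;=\; \sum_{j=1}^\ell W_j \sum_v z^{\mathrm{opt}}_{v,j,t} \,\ln\!\Bigl(1 + \tfrac{1}{\zt + 1/(2\ell k_j)}\Bigr),
\]
in the spirit of the BBN potential for weighted paging, summed across classes. I would then establish two standard inequalities: (i) on algorithm updates, the drop in $\Phi$ plus the dual growth pays for the per-class movement cost, up to a factor $O(\log \ell)$ coming from the $\ln(2\ell k_j / \text{something})$ bound; (ii) on opt's updates, $\Phi$ grows by at most $W_j \ln(2\ell)$ per unit opt cost in class $j$. Summing the per-class potentials contributes a factor $\ell$, and the fact that each request forces simultaneous motion in all $\ell$ classes contributes another factor $\ell$; together these produce $O(\ell^2 \log \ell)$ against $\sum_t \alpha_t \le \mathrm{opt}$. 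Feasibility of the packing constraint with the claimed $2\ell$ augmentation falls out of the multiplicative law, since the extra $\tfrac{1}{2\ell k_j}$ additive term keeps $\zt$ bounded away from $1$ on average.

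The main obstacle is correctly coupling the potential across weight classes. In single-class weighted paging the logarithmic factor cleanly bounds how much opt's single mass change perturbs the potential; here, one request forces the algorithm to pay for simultaneous work in \emph{every} class while opt may only be using one class, so the bookkeeping must credit the algorithm for its work on the class opt actually uses and absorb the work on the other $\ell-1$ classes into the overhead factor of $\ell$. A second subtlety is verifying that the compensating raises at $v \ne \sigma_t$, used to maintain $\sum_v \zt = n - 2\ell k_j$, are correctly charged; this requires a class-by-class case split on whether $v$ is in opt's support, exactly as in weighted paging, but now iterated over classes.
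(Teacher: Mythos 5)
Your high-level framework matches the paper's: both work with anti-page variables, update every weight class in parallel via a multiplicative rate proportional to $\frac{1}{W_j}(z+\text{const})$, and analyze against a logarithmic potential indexed by OPT's configuration. The parallel updates are indeed the mechanism implementing the ``same rate'' idea. However, there is a genuine gap in the stopping rule and, downstream, in the potential argument.

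You propose to drive $z_{\sigma_t,j,t}$ all the way to $0$ for some class. The paper instead halts the process as soon as $z_{\sigma_t,j,t}\le 1-\frac{1}{2\ell}$ for some $j$, and only at the very end scales the $x$-variables by $\min(2\ell\, x, 1)$ to manufacture a class with unit mass. This distinction is not cosmetic. In the potential-drift analysis, the rate at which $\Phi$ \emph{increases} comes from the $\ell-1$ ``wrong'' classes $j\ne j^\star$ at the requested vertex $\sigma_t$ (where $\mathrm{opt}_{\sigma_t,j,t}=0$), and that rate has a factor $\frac{1}{z_{\sigma_t,j,t}+\delta}$. Because the paper keeps $z_{\sigma_t,j,t}\ge 1-\frac{1}{2\ell}$ throughout the transfer, this factor is $O(1)$, so $\Delta^+\le \ell-1$ while $\Delta^-\le -\ell+\nicefrac{2}{3}$, and their sum is strictly negative. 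If instead you push $z_{\sigma_t,j,t}$ down toward $0$, that factor climbs to $\Theta(1/\delta)$ for the wrong classes near the end of the transfer, so $\Delta^+\approx(\ell-1)/\delta$, which cannot be dominated by $\Delta^-\approx -\ell$ for any choice of $\delta$. The drift argument breaks. (In BBN weighted paging this problem does not arise because OPT always has a page at the requested slot, so the potential has no term at $\sigma_t$; here OPT covers $\sigma_t$ with only one weight class, leaving $\ell-1$ problematic terms.)

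Separately, even setting aside the drift issue, your choice of additive constant $\frac{1}{2\ell k_j}$ yields a Lipschitz bound of $W_j\ln(1+2\ell k_j)=\Theta(W_j\log(\ell k_j))$ on the potential under OPT's moves, not $O(W_j\log\ell)$ as you assert; $\ln(2\ell k_j)$ is not $O(\log\ell)$ when $k_j$ is large. The paper avoids this by keeping $\delta=\frac{1}{2\ell}$ independent of $k_j$, which is only compatible with the relaxed stopping rule. So the missing ingredient is precisely the ``stop at $1-\nicefrac{1}{2\ell}$, then rescale by $2\ell$ at the end'' device: the $2\ell$ resource augmentation and one of the two factors of $\ell$ in the competitive ratio are produced by that final scaling, not by running the augmented process to completion.
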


Note that the condition in the theorem is stronger than \eqref{eq:cover},
the feasibility condition for \eqref{lp:original}, because we are using server from a single weight class to service this request.

Consider a
time $t$, and the request arriving at location $\sigma_t$. We first
set $\zt = \ztminus$ for all $v \in V, j \in \ell$. Now the algorithm
moves fractional server mass to $\sigma_t$ until a relaxed version 
of the covering constraint~\eqref{eq:cover} for time $t$ gets
satisfied. The relaxed constraint is given by
\begin{equation}\label{eq:cover-relax}
\exists j \in [\ell] \text{ such that } \zt[\sigma_t] \leq 1-\frac{1}{2\ell}.
\end{equation}

Indeed, if the constraint is violated, then 
for each vertex $v \neq \sigma_t$ and each $j \in
[\ell]$, if $v$ has non-zero server mass of weight $W_j$ (i.e., $\zt < 1$),  
then the algorithm moves server mass 
of weight $W_j$ from $v$ to $\sigma_t$ using the following differential
equation. (The derivative is with respect to a variable $s$ which starts from 0 and increases at uniform rate.)
\begin{equation}\label{eq:rate}
    \dotzt = \frac{1}{W_j |S_j|}\cdot \left(\zt+\delta\right) ~\quad \forall j \in [\ell], \forall v \in S_j. 
\end{equation}
Here, $S_j \subseteq V$ denotes the instantaneous set of locations 
(i.e., at the current value of the
variable $s$) that have $\zt < 1$, not including the location
$\sigma_t$, and
$\delta > 0$ is a parameter that we shall fix later.
Correspondingly, we reduce $\zt[\sigma_t]$ by the total amount 
of server mass of weight $W_j$ entering $\sigma_t$:
\begin{equation}\label{eq:rate-sigmat}
    \dotzt[\sigma_t] = - \frac{1}{W_j |S_j|}\cdot \sum_{v\in S_j} \left(\zt+\delta\right) ~\quad \forall j \in [\ell]. 
\end{equation}
Note that server mass is moved away other locations and 
into location $\sigma_t$ only if 
$\zt[\sigma_t] > 1-\frac{1}{2\ell}$ for all $j$.
Since $\zt[\sigma_t]\le 1$ for all $j$, it follows that 
$\zt \in [1-\frac{1}{2\ell}, 1]$ for all $j, t$.
Hence,
\begin{equation}\label{eq:z-lb}
    \zt \ge 1-\frac{1}{2\ell} \text{ for all } j, t  \quad \implies \quad |S_j| \geq 2\ell k_j - 1 \ge \frac{3\ell k_j}{2} \ge 3 \text{ for all } j, t,
\end{equation}
since $\ell \ge 2, k_j \ge 1$.

To analyze the algorithm, we use a potential function $\Phi$. 
The potential function depends on the offline (integral) optimal
solution---let us call it $\OPT$, and let $\optt$ be the indicator
variable for the location $v$ 
containing a server of weight $W_j$ at time $t$. 
The potential at time $t$ is defined as follows:
\[
    \Phi(t) := \sum_{v, j: \optt = 0} W_j \cdot \ln \left(\frac{1+\delta}{\zt+\delta}\right).
\]
Let $\cost(t)$ denote the algorithm's server movement cost at time $t$ and $\cost^{\OPT}(t)$ denote the corresponding quantity for the optimum solution $\OPT$. Our goal is to show that: 
\begin{align}
    \label{eq:pot}
    \frac{\cost(t)}{4\ell} + \Phi(t+1)-\Phi(t) \leq  \ln(1+\nf1\delta) \cdot \cost^{\OPT}(t).
\end{align}

The following properties of $\Phi(t)$ can verified easily:
\begin{itemize}
    \item {\bf Nonnegativity:} $\Phi$ is always nonnegative, since
      $\zt \le 1$.
    \item {\bf Lipschitzness property:} When the optimal solution moves a server of weight $W_j$ from one location to another, the increase in $\Phi$ is at most $W_j\cdot \ln (1+\nf1\delta)$.
\end{itemize}
The Lipschitzness property implies that~\eqref{eq:pot} holds when
$\OPT$ serves the request at $\sigma_t$.  It remains the analyze the
cost and change in potential when the algorithm changes its
solution. Consider the process when we transfer server mass to
$\sigma_t$. 

We first bound the online algorithm's cost. Since all the weight classes incur the same server movement cost while transferring to $\sigma_t$, the movement cost is $\ell$ times the movement cost incurred while transferring servers of a fixed class, say $j^\star$. 
The latter is at most 
\begin{align}
    W_\jstar \sum_{v\in S_\jstar } \dotzst 
    &\stackrel{\eqref{eq:rate}}{=}  \frac{1}{|S_\jstar|} \sum_{v\in S_\jstar} (\zst+\delta)
    \quad = \frac{|S_\jstar|+1 -k_\jstar + \delta |S_\jstar|}{|S_\jstar|} \leq 1+\delta.
\end{align}
Thus, the  upper bound on the $\frac{\cost(t)}{4\ell}$ term in the LHS of~\eqref{eq:pot} is at most 
$\frac{1+\delta}{4} \leq \nf{1}{3}$
provided $\delta \le \nf 13$.

Next, we lower bound the rate of decrease of potential $\Phi$. 
We begin by bounding the rate of decrease in potential due to because of server mass leaving all locations 
except $\sigma_t$:
\begin{align}
    \Delta^- 
    &= - \sum_{j \in [\ell], v \neq \sigma_t: \optt = 0} \frac{W_j}{\zt + \delta} \cdot \dotzt
    \quad \stackrel{\eqref{eq:rate}}{=} - \sum_{j, v \in S_j: \optt = 0} \frac{1}{\zt+\delta} \cdot \frac{\zt+\delta}{|S_j|} \notag \\
    &= -\sum_j  \frac{|\{v \in S_j: \optt = 0 \}|}{|S_j|}  
    \quad \stackrel{\eqref{eq:z-lb}}{\le}  - \sum_j \frac{|S_j| - k_j}{|S_j|} \leq -\ell \left(1 - \frac{2}{3\ell} \right) 
    = -\ell + \nf 23. \label{eq:delta-minus}
\end{align}
Next, we bound the rate of increase in potential due
to server classes $j\not=j^*$ because of server mass entering
$\sigma_t$:
\begin{align*}
    \Delta^+ 
    &= \sum_{j \not= j^*} \frac{W_j}{\zt[\sigma_t]+\delta} \cdot \dotzt[\sigma_t]
    \quad \stackrel{\eqref{eq:rate}}{=} \sum_{j \not= j^*, v\in S_j} \frac{W_j}{\zt[\sigma_t]+\delta} \cdot \frac{\zt+\delta}{|S_j|W_j}\\
    &= \sum_{j \not= j^*}   \frac{\sum_{v\in S_j} (\zt+\delta)}{|S_j|(\zt[\sigma_t]+\delta)} 
    \quad = \sum_{j \not= j^*}  \frac{(|S_j| - k_j  + (1-\zt[\sigma_t]))+\delta \cdot |S_j|}{|S_j|(\zt[\sigma_t]+\delta)} \; \\ 
    &\stackrel{\eqref{eq:z-lb}}{\le} \sum_{j \not= j^*}  \frac{(|S_j| - k_j  + \nf1{2\ell})+\delta \cdot |S_j|}{|S_j|(1-\nf{1}{2 \ell}+\delta)} 
    \quad \stackrel{\eqref{eq:z-lb}}{\le} \sum_{j \neq \jstar} \frac{1 - \nf{2}{3\ell} + \nf{1}{6\ell} + \delta}{1 - \nf{1}{2\ell} + \delta} %\leq  \sum_{j \neq \jstar} \left( 1 - \nf{1}{2 \ell} \right) \cdot \left( 1 - \nf{1}{3\ell} \right) \\
    \quad\leq \ell-1, 
\end{align*}
provided $\delta = \nf{1}{2\ell}$. Combining with
\eqref{eq:delta-minus}, we see that the overall change in potential is
$\Delta^- + \Delta^+ \leq -\nf13$. 
Consequently, we get that the change in potential pays for the
increase in the algorithm's cost (divided by $4\ell$)---which shows
\eqref{eq:pot}---when the fractional solution changes.

This implies that we have an algorithm for maintaining $\zt$ that satisfies \eqref{eq:cover-relax}. In terms of the competitive ratio, the algorithm loses $4\ell$ in \eqref{eq:pot} and $\ln(1+\nf1{\delta}) = O(\log \ell)$ in the Lipschitzness of the potential function. Note that \eqref{eq:cover-relax} implies that for all $t$, there exists $j$ such that $\xt[\sigma_t] \geq \frac{1}{2\ell}$. We scale the fractional variables to obtain $\xts := \min(2\ell\xt, 1)$; then, for all $t$, there exists $j$ such that $\xts[\sigma_t] = 1$. Note that this satisfies the condition in \Cref{thm:fractional}. Equivalently, the corresponding ``anti-page'' variables $\zts := 1-\xts$ satisfy the following condition for all $t$:
\begin{equation}\label{eq:cover-zts}
    \exists j \text{ such that } \zts[\sigma_t] = 0.
\end{equation}
The last scaling step creates a resource augmentation of $2\ell$, and increases the competitive ratio to $O(\ell^2\log \ell)$. 
This completes the proof of \Cref{thm:fractional}.

\subsection{Rounding the Fractional Solution Online}
\label{sec:online-rounding}

We round the fractional solution for each weight class $j$ separately. 
Let $T_j$ represent the request times $t$ when \eqref{eq:cover-zts} is satisfied by 
weight class $j$. Note that the solution $\zts$ for weight class $j$
represents a feasible fractional solution for an instance of the paging
problem with $2\ell k_j$ cache slots,
where there is a page request for each time $t\in T_j$ at location $\sigma_t$.

We now invoke the following known online rounding algorithm for the paging problem
separately in each weight class $j$ to complete the proof of \Cref{thm:online}.
\begin{lemma}\label{lem:page-rounding}{\cite{BlumBK99}}
    There is a randomized (polynomial time) online algorithm that converts any feasible fractional solution
    for an instance of the \page problem to an integral solution using the 
    same number of cache slots, and incurs constant times the cost of the fractional solution.
\end{lemma}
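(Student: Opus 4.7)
The plan is to maintain a distribution over integral caches $C_t \subseteq V$ of size $2\ell k_j$ such that the marginal invariant $\Pr[v \notin C_t] = \zts[v,t]$ holds for every location $v$ at every time $t$. Under this invariant, the condition $\zts[\sigma_t, t] = 0$ guaranteed by \Cref{thm:fractional} ensures $\sigma_t \in C_t$ with probability $1$, so every request is served by the integral cache. The remaining task is to design the joint evolution of $C_t$ so that (a) this marginal invariant is preserved after every fractional update, and (b) the expected number of insertions/evictions over time is bounded by a constant factor times the fractional movement.

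The coupling is built by processing the continuous evolution of $\zts$. Consider an infinitesimal time step in which the fractional solution transfers mass from the current location set (where $\zts$ decreases) to a set where $\zts$ increases, as dictated by the fractional algorithm. For each location $v$ whose anti-page variable $\zts[v,t]$ drops by $d\zts$, the integral algorithm inserts $v$ into the cache — conditionally on the event $\{v \notin C_t\}$, which has probability $\zts[v,t]$ — at rate $d\zts/\zts[v,t]$; simultaneously, to maintain $|C_t|$, one page $u$ currently in the cache whose $\zts[u,t]$ is rising is chosen as the evictee, with probability proportional to its rate of increase. A short calculation confirms that this preserves $\Pr[v \notin C_t] = \zts[v,t]$ for every $v$. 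The expected insertion rate summed over all $v$ then collapses to $\sum_{v} (d\zts[v,t]/dt)^{-}$, which is exactly the fractional movement rate; pairing each insertion with one eviction yields the expected integral cost at most a small constant times the fractional cost. Summing over the entire request sequence and applying this weight-class by weight-class (as indicated in the text preceding the lemma) gives the claim.

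The main obstacle is constructing a \emph{joint} distribution over caches that simultaneously respects all $n$ marginal constraints and the hard cardinality constraint $|C_t| = 2\ell k_j$ at every time — independent marginals would give the right expectations but catastrophically violate the size constraint. The resolution, which is the technical content of the Blum--Burch--Kalai scheme, is to evolve the distribution only through paired swaps (an eviction bundled with each insertion), so that $|C_t|$ is preserved in every realization; the joint distribution is therefore maintained implicitly through a sequence of local coupling rules rather than written down explicitly. Verifying that this local rule preserves the global marginal invariant across arbitrary fractional updates is the only delicate step, and it follows from a straightforward conditional-probability computation on the update rule described above.
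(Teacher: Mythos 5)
The paper does not prove \Cref{lem:page-rounding}: it cites \cite{BlumBK99} and uses it as a black box, so there is no in-paper argument to compare yours against. Evaluating your sketch on its own terms, the high-level plan (maintain a random cache $C_t$ with the marginals $\Pr[v\notin C_t]=\zts[v,t]$ and couple consecutive caches so the expected symmetric difference tracks the fractional movement) is the right one, but the specific coupling you describe does not preserve the marginals, and the step you call ``a straightforward conditional-probability computation'' is exactly where it breaks.

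Here is a concrete failure. Take $n=4$, cache size $k=2$, and anti-page values $z_1=z_2=\nicefrac12$, $z_3=0$, $z_4=1$, so the support of $C_t$ is $\{1,3\}$ and $\{2,3\}$ with probability $\nicefrac12$ each. Suppose the fractional solution now moves mass so that $z_4$ drops to $0.9$ and $z_2$ rises to $0.6$. Your rule says: insert $4$ at rate $dz_4/z_4 = 0.1$ conditioned on $4\notin C_t$ (always true), and simultaneously evict the page in the cache whose $z$ is rising, namely $2$. But on the event $C_t=\{1,3\}$, page $2$ is not in the cache, so no admissible evictee exists; if you suppress the insertion there, the unconditional insertion rate for $4$ becomes only $0.05$ and the marginal $\Pr[4\notin C]$ is not driven down to $0.9$. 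A valid coupling exists (insert $4$/evict $2$ at rate $0.2$, but \emph{only} on the event $C_t=\{2,3\}$), and it achieves the right transport cost, but its rate is state-dependent rather than the conditionally-uniform rate $dz_v/z_v$ you propose. In general the insertion event for $v$ (a sub-event of $\{v\notin C_t\}$) and the eviction event for $u$ (a sub-event of $\{u\in C_t\}$) must be matched up across cache states, and this requires a Hall-type/flow argument showing the matching is always feasible given $\sum_v z_{v,t}=n-k$; it is not an automatic consequence of the local update you wrote. That feasibility argument is precisely the technical content of the cited rounding, and it is the piece your sketch omits.
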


%%% Local Variables:
%%% mode: latex
%%% TeX-master: "main"
%%% End:

\section{Discussion}

In this work, we have given the first efficient offline and online algorithms with non-trivial guarantees for \wtd. Several interesting problems remains open:
\begin{enumerate}
\item For the case of two distinct weight classes, we show in
  \Cref{sec:hardness} that it is
  UG-Hard to obtain an $\Omega(N^c)$-approximation algorithm for some
  constant $c>0$, even with $(2-\varepsilon)$-resource
  augmentation. Can we extend such a hardness result to more weight classes? For
  example, can we show that for three distinct weight classes, it is
  UG-Hard to obtain a $C$-approximation algorithm for any {\em
    constant} $C$, even with $(3-\varepsilon)$-resource augmentation?
  The principal reason why our hardness proof for $\ell=2$ does not
  extend here is because one needs to recursively cycle through all
  subsets (of a certain size) of $V$ to create an integrality gap
  instance for the natural LP relaxation. If the size of these subsets
  is large, then the length of the input becomes very large. If the
  size of these subsets is small, then it is not clear how to extend
  this to a hardness proof.
\item In \Cref{sec:offline}, we give an offline constant approximation algorithm which
  requires slightly more than $2 \ell$-resource augmentation. Can we
  get a constant approximation algorithm (or even an optimal
  algorithm) with exactly $\ell$-resource augmentation? We conjecture
  that the integrality gap of~\ref{LP:tag} is constant (or even $1$)
  if the integral solution is allowed $\ell$-resource augmentation.
\item In the online case, we give a $O(\ell^2 \log \ell)$-competitive
  algorithm with $2\ell$-resource augmentation in \Cref{sec:online}. Can we get a
  constant-competitive algorithm with $O(\ell)$-resource augmentation,
  i.e., a result in the same vein as our offline algorithm?
\end{enumerate}

%%% Local Variables:
%%% mode: latex
%%% TeX-master: "main"
%%% End:

{\small
\bibliographystyle{alpha}
\bibliography{bib}
}

\end{document}

%%% Local Variables:
%%% mode: latex
%%% TeX-master: t
%%% End: